\theoremstyle{definition}
\newtheorem{theorem}{Theorem}
\newtheorem{lemma}{Lemma}[section]
\newtheorem{proposition}[lemma]{Proposition}
\newtheorem{corollary}[lemma]{Corollary}
\newtheorem{definition}[lemma]{Definition}
\newtheoremstyle{myremark} 
{7pt}                    
{7pt}                    
{}  	                 
{}                           
{\bf}       	         
{.}                          
{.5em}                       
{}  
\theoremstyle{myremark}
\newcommand{\R}{\mathbb{R}}
\newcommand{\Z}{\mathbb{Z}}
\newcommand{\bdry}{\partial}
\newcommand{\M}[3]{\mathcal{M}^{#1}_{#2,#3}}
\newcommand{\cc}[1]{#1_\bullet}
\newcommand{\iso}{\cong}
\newcommand{\Tor}{\mathrm{Tor}}
\newcommand{\bcd}{\mathrm{bcd}}
\newcommand{\spnn}{\mathrm{span}}
\newcommand{\txrd}[1]{\textcolor{red}{#1}}              
\begin{document}


\title[Representations of Energy Landscapes by Sublevelset Persistent Homology]{Representations of Energy Landscapes by Sublevelset Persistent Homology: An Example With n-Alkanes}



\author{Joshua Mirth}
\affiliation{Department of Mathematics, Colorado State University, Fort Collins, Colorado 80524, USA}
\affiliation{Department of Computational Mathematics, Science, and Engineering, Michigan State University, East Lansing, Michigan 48824, USA}
\author{Yanqin Zhai}
\affiliation{Department of Nuclear, Plasma, and Radiological Engineering, University of Illinois at Urbana-Champaign, Urbana, Illinois 61801, USA}
\affiliation{Beckman Institute of Advanced Science and Technology, University of Illinois at Urbana-Champaign, Urbana, Illinois 61801, USA}
\author{Johnathan Bush}
\affiliation{Department of Mathematics, Colorado State University, Fort Collins, Colorado 80524, USA}
\author{Enrique G.\ Alvarado}
\affiliation{Department of Mathematics and Statistics, Washington State University, Pullman, Washington 99164, USA}
\author{Howie Jordan}
\affiliation{Department of Mathematics, University of Colorado, Boulder, Colorado 80309, USA}
\author{Mark Heim}
\affiliation{Department of Mathematics, Colorado State University, Fort Collins, Colorado 80524, USA}
\author{Bala Krishnamoorthy}
\affiliation{Department of Mathematics and Statistics, Washington State University, Vancouver, Washington 98686, USA}
\author{Markus Pflaum}
\affiliation{Department of Mathematics, University of Colorado, Boulder, Colorado 80309, USA}
\author{Aurora Clark}
\email[]{auclark@wsu.edu}
\affiliation{Department of Chemistry, Washington State University, Pullman, Washington 99164, USA}
\author{Y Z}
\email[]{zhyang@illinois.edu}
\affiliation{Department of Nuclear, Plasma, and Radiological Engineering, University of Illinois at Urbana-Champaign, Urbana, Illinois 61801, USA}
\affiliation{Beckman Institute of Advanced Science and Technology, University of Illinois at Urbana-Champaign, Urbana, Illinois 61801, USA}
\affiliation{Department of Electrical and Computer Engineering, University of Illinois at Urbana-Champaign, Urbana, Illinois 61801, USA}
\author{Henry Adams}
\email[]{henry.adams@colostate.edu}
\affiliation{Department of Mathematics, Colorado State University, Fort Collins, Colorado 80524, USA}


\date{\today}

\begin{abstract}
Encoding the complex features of an energy landscape is a challenging task, and often chemists pursue the most salient features (minima and barriers) along a highly reduced space, i.e.\ 2- or 3-dimensions.
Even though disconnectivity graphs or merge trees summarize the connectivity of the local minima of an energy landscape via the lowest-barrier pathways, there is more information to be gained by also considering the topology of each connected component at different energy thresholds (or sublevelsets).
We propose sublevelset persistent homology as an appropriate tool for this purpose.
Our computations on the configuration phase space of n-alkanes from butane to octane allow us to conjecture, and then prove, a complete characterization of the sublevelset persistent homology of the alkane $C_m H_{2m+2}$ Potential Energy Landscapes (PEL), for all $m$, and in all homological dimensions.
We further compare both the analytical configurational potential energy landscapes and sampled data from molecular dynamics simulation, using the united and all-atom descriptions of the intramolecular interactions.
In turn, this supports the application of distance metrics to quantify sampling fidelity and lays the foundation for future work regarding new metrics that quantify differences between the topological features of high-dimensional energy landscapes.

\smallskip
\noindent This manuscript was published in the Journal of Chemical Physics (\href{https://doi.org/10.1063/5.0036747}{DOI:10.1063/5.0036747}).
\end{abstract}

\pacs{}

\maketitle 



%
%

%

\section{Introduction}

High-dimensional energy landscapes (EL), including the configuration space of electrons, atoms, molecules, colloids, and other ``particles", frequently arise in materials science, chemistry, physics, and a wide range of dynamical systems.
One of the most common forms of an EL in chemical systems is the potential energy landscape (PEL) coming from the reduction of the electronic Hamiltonian in the Born--Oppenheimer approximation.\cite{BornOppenheimer,wales2003energy,stillinger2016ELs}
By this construction, the EL is a function across the nuclear configuration space of the system with dimension $3N$, where $N$ is the number of nuclei. 
How the physical system evolves is often determined by such an EL,\cite{Denzel1978, RevCompChem1997} but their high-dimensionality poses critical challenges to the analysis and understanding of them.
Indeed, one instance of the ``curse of dimensionality'' is that the space required to even store or represent such a surface grows exponentially with its dimension.
Further important variants of the EL may be based upon classical representations of inter-particle interactions, or may incorporate statistical mechanical ensembles of populated configurational states of the system to yield the free energy landscape.
The topological method used in this paper to describe ELs, namely sublevelset persistent homology, can be applied as well to all variants. 

Often chemists reduce the dimension of the EL from $3N$ to fewer by removing degrees of freedom that are not associated with the specific chemical phenomena of interest (e.g.\ rotations, translations, or the motion of atoms not of interest) according to chemical intuition and knowledge.
Conceptually it is convenient to reduce the dimensionality of the EL to two or three, yet in reality identifying the ``best set" of reduced variables, and understanding the extent of information loss upon dimensionality reduction, are significant and ongoing challenges.
One important criterion in regard to finding the ``best sets" of reduced variables is that topological descriptors such as the Morse indices of key critical points of chemical relevance of the EL survive the reduction process.
Although there are many examples where ``simple" geometric criteria suffice for reducing the landscape to represent a given chemical transformation, system complexity can cause coupling of many configurational degrees of freedom such that higher-dimensional representations of the EL are necessary.
Consider condensed phase ion-pairing reactions, where recent work has attempted to incorporate solvent reorganization by constructing ELs that depend upon ion separation distance, solvent density and solvent coordination.\cite{Mullen2014}
Alternatively, reduced variables that are intrinsically of dimensionality more than 1 have been employed (i.e.\ topological descriptors of intermolecular interactions).\cite{pietrucci2011graph,zhou2019pagerank}

The representation and visualization of surfaces described by scalar valued functions such as ELs is a highly nontrivial problem where the difficulty increases with the dimensionality of the surface.
It is thus attractive, and important, to consider how to represent the high-dimensional surface in a more compact form that could, in principle, support data-driven comparisons of ELs of different chemical systems that have different dimensionality.
Encoding or vectorizing complex chemical structures in phase space is already being employed in machine learning frameworks for materials discovery and design strategies.\cite{kulik2020, Butler2020}
Outside of highly local representations of ELs that are plotted and analyzed in 2--4 dimensions, a common representation of large and complex regions of ELs are disconnectivity graphs (or merge trees).\cite{Becker1997}
A disconnectivity graph compactly represents the energy value of each local minimum, and the energy barriers, usually the lowest barrier, required to pass between nearby local minima, which are critical to the chemistry of a system.
After their introduction in the late 1990s, the graphical properties of ELs have been exploited for a number of purposes, to both understand chemical transformations in multidimensional ELs (i.e.\ protein folding)\cite{Li2013} and for EL exploration.\cite{wales2003energy} 
A disconnectivity graph encodes the number of connected components in an EL; two configurations are in the same connected component if there is a path between the configurations that does not exceed the chosen energy barrier.
As the energy barrier increases, a disconnectivity graph stores how the connected components of the EL merge.
Formally, the disconnectivity graph of an EL can be understood as a tree graph with leaves corresponding to the local minima of the surface, and internal nodes to lowest energy values which are critical points of index one connecting the local minima.
The edges then correspond to pathways connecting the nodes within the surface below the given energy threshold. 
Disconnectivity graphs are widely and successfully used in the study of ELs,\cite{Wales2005} however, they do not capture all topological information of interest and also lack geometric information.
A related construction called lifted nearest neighbor graphs (NNGs) have been constructed using $0$-dimensional topological persistence to study local minima and index 1 saddle points of sampled energy landscapes.\cite{cazals2015conformational}
Metric disconnectivity graphs have been introduced to include some geometric information. \cite{SmeetonOakleyJohnston2014}
Reeb graphs are a refinement of disconnectivity graphs obtained by identifying in a surface described by a scalar function $f$ the points of a 
given level set $f^{-1} (c)$ which can be connected within that level set by a path. 
\cite{beketayev2011topology,reeb1946points}

Though disconnectivity graphs encode the energy barrier of the minimal energy pathway between two local minima, they do not represent multiple transition pathways.
This additional information is measured by sublevelset persistent homology. Described in more detail below, we briefly compare the essential features of ELs that this methodology can articulate.
Consider Figure~\ref{fig:MergeTreeVsPH2}, which shows an EL, transition pathways between local minima, and the energies associated to critical points of index 0 (minima), 1 (saddles), and 2.
The disconnectivity graph encodes the energy barriers associated with the local minima and the index 1 critical points (saddle points) that first merge local minima, and ignores all other transition paths.
In fact, 0-dimensional sublevelset persistent homology for the index 1 critical points has been previously employed to derive disconnectivity graphs and coarse grained representations of energy landscapes (based upon the energy filtration employed after including the effects of temperature).\cite{carr2016energy}
This information is reflected, however, in the 1-dimensional persistent homology.
Consider the two local minima on the top left, which have three transition paths between them.
The disconnectivity graph encodes that the local minima merge at energy level 2, but ignores the additional transition paths with energies 4 and 5 between them, which are important for the dynamics of the chemical system.
1-dimensional persistent homology (loops), however, measures the second transition pathway of energy 4, which creates a 1-dimensional bar that ends at energy level 6 when this pathway merges with the prior pathway of energy 2.
1-dimensional persistent homology also measures the third transition pathway of energy 5, which creates another 1-dimensional bar that ends at energy level 7 when this pathway merges with the prior pathway of energy 4.
In summary, whereas disconnectivity graphs only encode merge events between local minima, sublevelset persistent homology furthermore encodes merge events between transition pathways and higher-dimensional features.
Higher-dimensional persistent homology barcodes encode relative barrier heights between $k$-dimensional features, for all $k$, enabling more accurate estimations of the time scale of the associated chemical dynamics.

\begin{figure}[ht]
\centering
\includegraphics[width=0.45\textwidth]{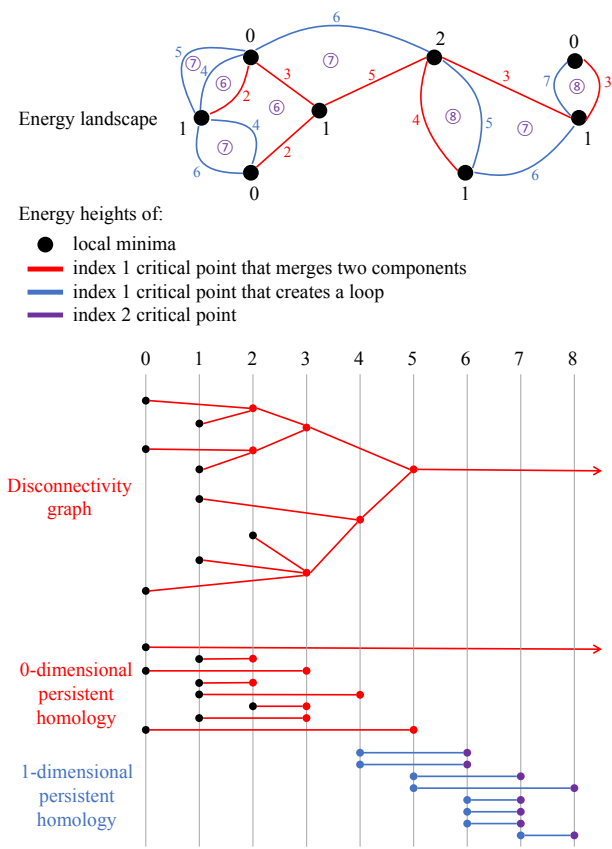}
\caption{
A map of an energy landscape, its disconnectivity graph, and its persistent homology.
The red edges are the edges of the minimal spanning tree (merging) connected components; the blue edges instead create 1-dimensional loops.
The 0-dimensional persistent homology is obtained by simply cutting ``joints" in the disconnectivity graph, and then laying the branches flat.
Whereas the disconnectivity graph or 0-dimensional persistent homology encodes the local minima and index 1 critical points merging components, the 1-dimensional persistent homology also encodes the index 1 critical points whose transition pathways form loops, along with the energy barriers of index 2 critical points that fill between two transition pathways.
}
\label{fig:MergeTreeVsPH2}
\end{figure}

Sublevelset persistent homology derives from persistent homology, a technique that has been used in the chemistry community to summarize the shapes of a molecule, supramolecular assemblies, and other complex structures that emerge in multicomponent solutions.
Given a set of atom locations in $\R^3$, one can build a simplicial complex (consisting of vertices, edges, triangles, tetrahedra) whose vertex locations are given by the positions of the atoms.
Arguably the first paper on persistent homology\cite{edelsbrunner2000topological} contains computations on the nonribosomal peptide gramicidin A, on a portion of a periodic zeolite, and on a portion of DNA.
Perhaps the first simplicial complexes built on top of molecules from the perspective of quantifying topology were alpha complexes,\cite{edelsbrunner1994three,edelsbrunner2005geometry} which pre-date and in fact helped lead to the invention of persistent homology.
The philosophy of persistence, and indeed multi-parameter persistence, has been used to improve the robustness of clustering (disconnectivity graphs) of molecular dynamics data~\cite{chang2013persistent}, though this summary is only of 0-dimensional homology.
The nudged elastic band method\cite{henkelman2002methods} has also inspired new methods in topological data analysis (TDA).\cite{NEBinTDA}
More modern applications of persistent homology to chemistry include Refs.~\citenum{xia2015persistent,xia2014persistent,townsend2020representation}, which better account for differences between different atom identities, and which successfully use the persistent homology barcodes for machine learning tasks.
By contrast, instead of considering the persistent homology of a single molecular configuration, Ref.~\citenum{membrillo2019topology} considers the topology of the entire configurational space, either on its own or alternatively equipped with an energy function, including the case of pentane.
This work is heavily inspired by Ref.~\citenum{martin2010topology}, which studies the energy landscape of the cyclo-octane molecule.

Herein, we propose that  sublevelset persistent homology is an efficient quantitative descriptor of high-dimensional surfaces, as demonstrated by the study of a series of energy landscapes, with increasing dimensions. 
In the process, we create a dictionary between the lower-dimensional topological features measured by disconnectivity graphs that are also measured by sublevelset persistent homology.
These low-dimensional features are furthermore generalized by higher-dimensional features measured by persistent homology but not disconnectivity graphs.
As an example system, the Potential Energy Landscape (PEL) of n-alkanes, from n-butane to n-octane, is considered and analyzed in this paper.
(For the remainder of this paper, we refer to the n-alkanes simply as alkanes, as these are the only alkanes we consider.)
Within this series:
the collective variables for the reduced EL (based upon the 4 carbon-center dihedral angles) are known;
reducing dimensions down to the collective variables preserves the critical points of index 0;
the dimensionality of the EL is systematically increased by adding more carbons to the chain;
the relative barrier heights in the EL can be varied by considering bonded vs.\ non-bonded interatomic interactions.
Of particular note is that the EL of butane can be embedded in pentane, which can be embedded in hexane, etc, which then makes it amenable to derive and prove the sublevelset persistent homologies of all alkane chains.
The alkanes also exhibit surprisingly rich behavior in their physicochemical properties that is, in part, related to the configurational EL.
For example, the odd-even effect of alkanes, namely, the zig-zag variations of their physicochemical properties such as the melting points, solid densities, sublimation enthalpy, solubility, and modulus as a function of the number of carbon atoms, has been known since 1877. \cite{baeyer1877ueber, badea2006odd, mishra2013odd, boese1999melting}
However, only recently, a dynamic odd-even effect in the transport properties of liquid alkanes was discovered,\cite{yang2016dynamic} challenging the established understanding of the odd-even effect as a consequence of packing efficiency in the alkanes crystalline solids.\cite{boese1999melting}
Therefore, the topology of the EL may provide a new perspective to elucidate the intriguing physicochemical properties of alkanes.
Toward that end, our computations on the alkanes up through octane allow us to conjecture, and then prove, a complete characterization of the sublevelset persistent homology of the alkane C$_m$H$_{2m+2}$ PEL, for all $m$, and in all homological dimensions.

\section{Sublevelset Persistent Homology}
\label{sec:sublevelset-ph}

\subsection{Introduction of Sublevelset Persistent Homology}

Given a real-valued function defined on some potentially high-dimensional domain, {\it sublevelset persistent homology} is a way to visualize the shape of that function, and in particular, the shape of its various sublevelsets.
Much of the popularity of persistent homology\cite{Carlsson2009,EdelsbrunnerHarer,PersistentImages} stems from the fact that it is computable.
Indeed, sublevelset persistent homology can be computed with a running time that is sub-cubic in the number of cells (vertices, edges, triangles or squares, tetrahedra or cubes, etc.) needed to mesh the domain.\cite{milosavljevic2011zigzag}

Let $X$ be a space and let $f\colon X\to \R$ be a real-valued function.
For example, $X=S^1\times S^1$ may be a torus which encodes the two dihedral angles in a pentane molecule configuration, and $f\colon X\to \R$ may be the reduced EL for pentane, ignoring the hydrogen degrees of freedom and the vibrational and bending modes of carbons.
The {\it sublevelset of $f \colon X \to \R$ at height $r$} is $f^{-1}((-\infty,r])=\{x\in X~|~f(x)\le r\}$.
That is, the sublevelset is the set of all points in $X$ whose value under $f$ is at most $r$.
In the case of the pentane EL, this sublevelset encodes the set of all conformations of the pentane molecule with energy at most $r$.
In other words, the sublevelset $f^{-1}((-\infty,r])$ is the restricted energy landscape, with energy restricted to be at most $r$.
As $r$ varies from small to large, sublevelset persistent homology describes how the topology of the sublevelsets change.

For $r\le r'$, note that we have an inclusion of sublevelsets $f^{-1}((-\infty,r])\hookrightarrow f^{-1}((-\infty,r'])$.
As a result, we can track how the homology (the number of holes in each dimension) changes as the value of $r$ increases.
0-dimensional holes correspond to connected components, 1-dimensional holes correspond to loops, and 2-dimensional holes correspond to voids, etc.
Persistent homology allows us to count the number of holes in a restricted energy landscape $f^{-1}((-\infty,r])$ not only for a single value of $r$, but also over an evolution of energies as $r$ increases from small to large.

\subsection{Advantages of Sublevelset Persistent Homology Barcodes}
\label{sec:results-dictionary}

Energy Landscapes are often described by disconnectivity graphs,\cite{wales2003energy} or equivalently merge trees, that encode how connected components of the EL appear and merge as the energy threshold is raised.
A disconnectivity graph analysis, however, ignores the topology of connected components: a connected component that is a disk is treated the same as a connected component that has holes or voids.
We propose the use of sublevelset persistent homology to understand more of the topology of restricted energy landscapes, including not only 0-dimensional persistent homology (which is closely related to the disconnectivity graph), but also higher-dimensional persistent homology.
Whereas the disconnectivity graph depends only on local minima and critical points of index 1 in the energy landscape, the sublevelset persistent homology depends on critical points of all indices (including minima and critical points of index 1 as special cases).
Indeed, if the EL is a ``Morse function,"
then each $k$-dimensional bar in the persistent homology barcode has a birth time corresponding to the energy value of a critical point of index $k$, i.e.\ a critical point at which there are $k$ linearly independent directions in which the energy value decreases.
Furthermore, a death time of a bar in the sublevelset persistent homology of a Morse function corresponds to the energy value of a critical point of index $k+1$.
A Morse function, defined rigorously in Appendix~\ref{app:num-bars}, is a smooth real-valued function with no degenerate critical points.

\begin{figure}[ht]
\includegraphics[width=\columnwidth]{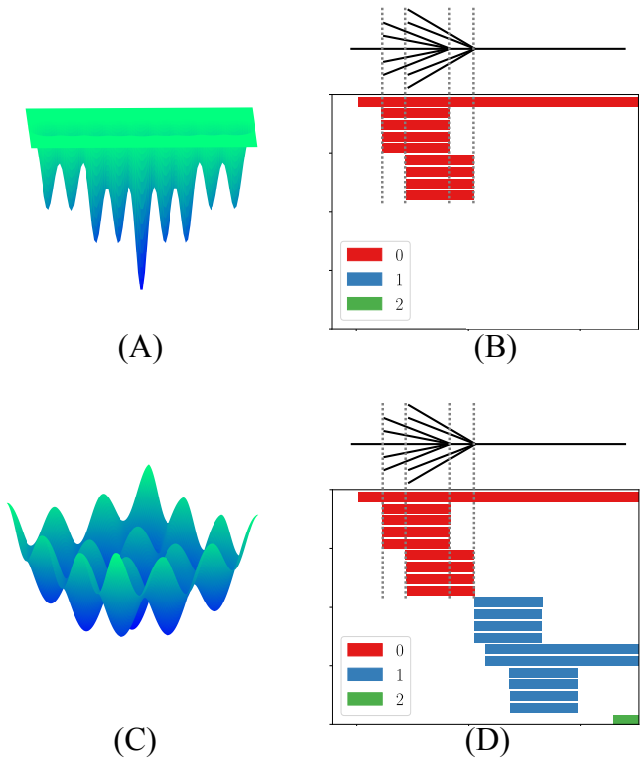}
\caption{
While a disconnectivity graph summarizes the connected components in an energy landscape, it does not portray the topology of each connected component.
The two ELs drawn above have the same disconnectivity graph.
However, the connected components of sublevelsets of the EL (A) are contractible, whereas those of the EL (C) may contain holes, as shown by the sublevelsets in Figure~\ref{fig:pentane-sublevelsets}.
The 0-dimensional persistent homology in red measures much of the same information as the disconnectivity graph, but the 1-dimensional persistent homology in blue distinguishes these two ELs.
The $y$-axis in (A) and (C) is energy,
as is the $x$-axis in (B) and (D).
The $y$-axis in (B) and (D) is an ordering and count of the bars.
For this and for all of the remaining figures in the paper, we order the bars vertically by birth energy, but any vertical reordering of the bars gives a valid representation of the persistent homology.
}
\label{fig:merge-vs-PH}
\end{figure}

As a motivating example, consider Figure~\ref{fig:merge-vs-PH}.
We see two different ELs whose disconnectivity graphs are exactly the same.
This similarity is also reflected in the 0-dimensional sublevelset persistent homology barcode plots (red bars)---indeed, the disconnectivity graph of an EL completely determines its 0-dimensional sublevelset persistent homology.
Nevertheless, the two energy landscapes are quite different.
In the EL shown in Figure~\ref{fig:merge-vs-PH}A, all connected components of sublevelsets are connected.
In the EL shown in Figure~\ref{fig:merge-vs-PH}C, the connected components instead have a ``Swiss cheese" structure, with a variety of holes.
These differences are reflected in the 1-dimensional persistent homology sublevelsets (blue bars).
One should think of sublevelset persistent homology as a higher-dimensional analogue of disconnectivity graphs, tracking higher-dimensional spatial features of an EL.
Indeed, sublevelset persistent homology has the power to distinguish ELs whose disconnectivity graphs are identical, and to provide summaries of the higher-dimensional topology of the ELs.

Although it has been traditionally held that that the presence of higher-dimensional topology and its associated transition pathways (e.g., second-order saddles) are inconsequential to the mechanisms and dynamics of chemical processes, their presence in energy landscapes is well-established.\cite{Pradhan2017, Dmitrenko2004, Shaik2001}  Further, recent work has demonstrated from \emph{ab-initio} molecular dynamics that some reactions can follow second order pathways, for example the denitrogenation of 1-pyrazoline,\cite{Pradhan2019}  wherein the statistical observation of the second-order path is likely altered by available energy.
Though disconnectivity graphs encode all of the local minima of an EL, they retain only some of the index 1 critical points, i.e.\ transition states between local minima.
By contrast, from the sublevelset persistent homology one can compute the critical points of all indices, including all \emph{transition states} between local minima, as explained in Section~\ref{sec:ph-alkane}.
Each additional transition pathway provides an alternative relaxation mechanism of the system that is missing from the disconnectivity graph.
Therefore, the timescale computed from a disconnectivity graph alone will be an overestimate to a certain degree.
Persistent homology measures not only all transition pathways, but also the relative barrier heights between them (see Figure~\ref{fig:MergeTreeVsPH2}), enabling more rigorous statistical mechanics of the dynamics of the system.

Another important feature of persistent homology is that it is {\it stable}, meaning that small changes to the input produce only small changes in the persistent homology.\cite{cohen2007stability}
This continuity property is necessary in any data analysis technique, as input noise or measurement error is unavoidable.
In Section~\ref{ssec:ua-aa} we quantify the difference between the analytical and OPLS-UA alkane energy landscapes using this notion of similarity between persistent homology barcodes.

Another powerful topological descriptor that can be used to visualize the surface described by a real-valued function is the Morse--Smale complex.\txrd{\cite{BanyagaHurtubise}}
It can be associated to any EL described by a potential energy function which fulfills the transversality condition; see Appendix~\ref{app:morse-complex}.
In contrast to filtering the EL via sublevelsets, the Morse--Smale complex provides a partitioning of the EL into pieces whose points have equal behavior in regard to the gradient flow of the potential energy function.
As a result, the Morse--Smale complex captures crucial information chemists are interested in, namely identifying and describing areas within the EL with similar energetic behavior such as points which are attracted to the same local minimum.
Moreover, it contains strictly more topological information than the sublevelset persistent homology, such as how critical points of adjacent indices are connected to each other along gradient paths.
The increase in topological-geometric information comes with the price that Morse--Smale complexes can only be visualized easily for surfaces of dimension at most three.
We can compute the Morse--Smale complex for some small chain-length alkanes as seen in Figure~\ref{fig:EL-butane-pentane}.

\section{Physical Data Sets and Methods}

\subsection{The Analytical PEL of Alkanes}
\label{ssec:analytical}

In this paper, we choose gas phase alkanes as a model system to demonstrate the application of sublevelset persistent homology.
In the absence of intermolecular interactions, the intramolecular interactions were described using the the Optimized Potentials for Liquid Simulations (OPLS) force field, which, in general, consists of energies of bonds, angles, dihedrals, and nonbonded interactions.\cite{jorgensen1988opls} 
We first adopted the OPLS-UA (united atom) approach by coarse-graining the hydrogen degrees of freedom into the parameters of adjacent carbon atoms implicitly.
To further simplify the EL, we fixed all bond lengths and the three-body angles.
The nonbonded intramolecular interactions were also ignored.
Therefore, the potential energy landscape (PEL) of a single alkane molecule is only governed by the C--C--C--C dihedral angles $\phi_i$ as
\begin{align}
\label{eq:alkane-pel}
&V(\phi_1, \ldots, \phi_{m-3})=\nonumber\\
&\sum_{i=1}^{m-3}\left(c_1\left[1+\cos\phi_i\right]+c_2\left[1-\cos2\phi_i\right]+c_3\left[1+\cos3\phi_i\right]\right),
\end{align}
where $m$ is the number of carbon atoms, the energy coefficients are $c_1/k_B=355.03$~K, $c_2/k_B=-68.19$~K, and $c_3/k_B=791.32$~K, and $k_B$ is the Boltzmann constant.\cite{chen1998thermodynamic}
The PEL of an alkane molecule with $m$ carbon atoms has a dimension of $n=m-3$ in this simplistic analytical model, in the sense that the PEL is a real-valued function $V=f_n \colon (S^1)^n\to\R$, where each circular factor $S^1$ encodes a dihedral angle $\phi_i$.
The reduction from $\R^{3m}$ down to this lower-dimensional PEL on $(S^1)^n$ satisfies the important property that the indices of index zero critical points are preserved.
An observation that will be useful in our classification of the sublevelset persistent homology of the alkane PEL is that
\begin{align*}
&f_n(\phi_1, \ldots, \phi_n)=V(\phi_1, \ldots, \phi_n)\\
=&V(\phi_1)+\ldots+V(\phi_n)=f_1(\phi_1)+\ldots+f_1(\phi_n).
\end{align*}

\begin{figure}
\begin{center}
\includegraphics[width=0.8\columnwidth]{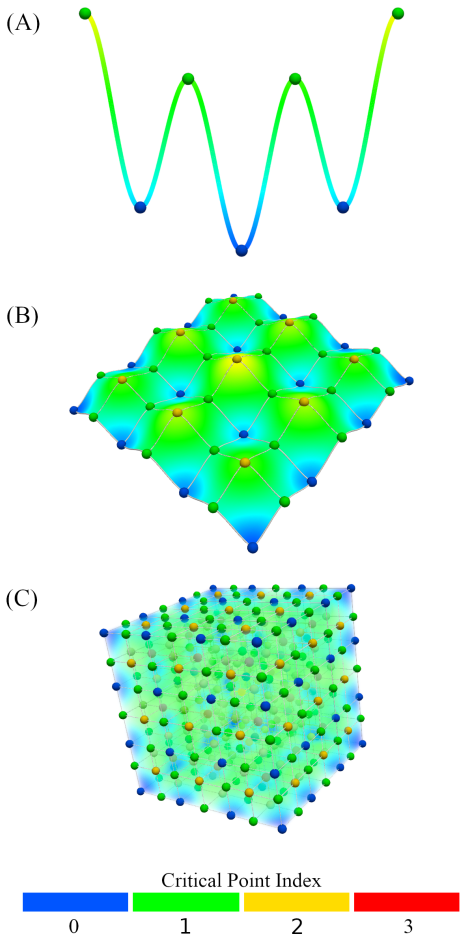}
\end{center}
\caption{
Morse--Smale complexes for the analytical PEL for (A) butane, (B) pentane, and (C) hexane.
The critical points are indicated and are colored by their index.
For pentane and hexane, the unique flows between critical points whose indices differ by one are also indicated; see Appendix~\ref{app:morse-smale}.
For (A) and (B), the energy scales are indicated by the vertical direction.}
\label{fig:EL-butane-pentane}
\end{figure}

\subsection{Molecular Dynamics Simulation of the Coarse-Grained and All-Atom Models of Alkanes}

The second PEL data set is obtained from molecular dynamics (MD) simulations of the alkane OPLS-UA model.
A single alkane molecule, from butane to hexane ($m$ from 4 to 6), was placed in the center of a 10 nm cubic box with periodic boundary conditions enforced in all three dimensions.
All bonds and angles were fixed using the SHAKE algorithm.\cite{ryckaert1977numerical}
The MD simulations were performed in the canonical ensemble using the Nos\'{e}-Hoover thermostat, where the temperature was was set to $RT\approx831$~kJ/mol and employed to overcome the potential energy barriers of the system and ensure ergodic sampling.
Initially, the velocity was randomly assigned to each atom according to the Maxwell--Boltzmann distribution.
The integration time was 0.1~fs and the molecular configuration and potential energy were collected every 0.1~ps for 1,000,000 frames.
The resulting data set consists of the sampled configurational potential energy landscape for all dihedral rotations in the system.
All the simulations were performed using GROMACS 5.0.7.\cite{abraham2015gromacs}

The PEL obtained from the united atom approach was then compared to our third PEL, the all-atom (OPLS-AA) model.
We used the same simulation protocol for OPLS-AA as we did with OPLS-UA except that:
(i) the C--C--C angles were left flexible to ensure that the total number of constraints does not exceed the number of degrees of freedom,
(ii) the temperature was set to $RT\approx16.6$~kJ/mol to reduce the effect of the flexible angles on the total potential energy, and
(iii) the molecular configuration and potential energy were collected every 1~fs for 10,000,000 frames to obtain better statistics.
In this model, all hydrogen atoms are considered explicitly. 
Consequently, extra dimensions associated with the hydrogen degrees of freedom were introduced to the PEL. 
In order to compare with the OPLS-UA model, the potential energy corresponding to a specific dihedral angle configuration was calculated by averaging all the potential energies sampled at that configuration, i.e.\ we averaged the contribution of the hydrogens to the PEL and the reduced the PEL to be only as a function of the C--C--C--C dihedral angles.
The resulting PEL is qualitatively similar to the PEL obtained with the OPLS-UA model.
However, the averaging produced a rougher PEL with small local features, and the energy values of all features have been shifted higher.

\subsection{Computation of Sublevelset Persistent Homology}

We computed the sublevelset persistent homology of the PEL of alkanes up through octane using the GUDHI software package;\cite{maria2014gudhi} our code is publicly available.\cite{delta-persistence}
For the analytical OPLS-UA, we computed the sublevelset persistent homology of a cubical grid with 63 (approximately $10\cdot 2\pi$) vertices on each circular axis.
This grid was treated as a filtered cubical complex via the lower-star filtration, assigning to higher-dimensional cubes a filtration value matching the maximum energy value on a boundary vertex.
For the MD simulations of $1,000,000$ configurations for OPLS-UA we computed the sublevelset persistent homology as follows.
First we downsampled to $1,000$ vertices using sequential maxmin,\cite{de2004topological} and then computed the sublevelset lower-star filtration of a Delaunay triangulation with periodic boundary conditions.
The size of the resulting complexes are given in Table~\ref{table:delaunay}.

\begin{table}[htp!]
    \centering
\begin{tabular}{|l|r|r|} \hline
Dimension & UA-Pentane & UA-Hexane  \\ \hline
0     & $1000$ & $ 1000$ \\ \hline
1 & $3000$ & $7528$ \\ \hline
2 & $2000$ & $13056$ \\ \hline
3 & 0 & $6528$ \\ \hline
Total & $6000$ & $28112$ \\ \hline
\end{tabular}
    \caption{Number of simplices in the periodic Delaunay triangulation for UA-Pentane and UA-Hexane.}
    \label{table:delaunay}
\end{table}

For the MD simulation of $1,000,000$ configurations of OPLS-AA we subdivided the domain into a cubical grid with $100$ cubical regions on each circular axis. The energy values were then averaged over $9\times9$ patches of cubes, as described in Section~\ref{ssec:ua-aa}, and the persistent homology computed of the lower-star filtration of the cubical grid with respect to the averaged values.

\section{Sublevelset Persistent Homology of the PEL of Alkanes}
\label{sec:ph-alkane}

\subsection{Sublevelset Persistent Homology of the Analytical PEL of Pentane}

\begin{figure*}
\centering
\includegraphics[width=1\textwidth]{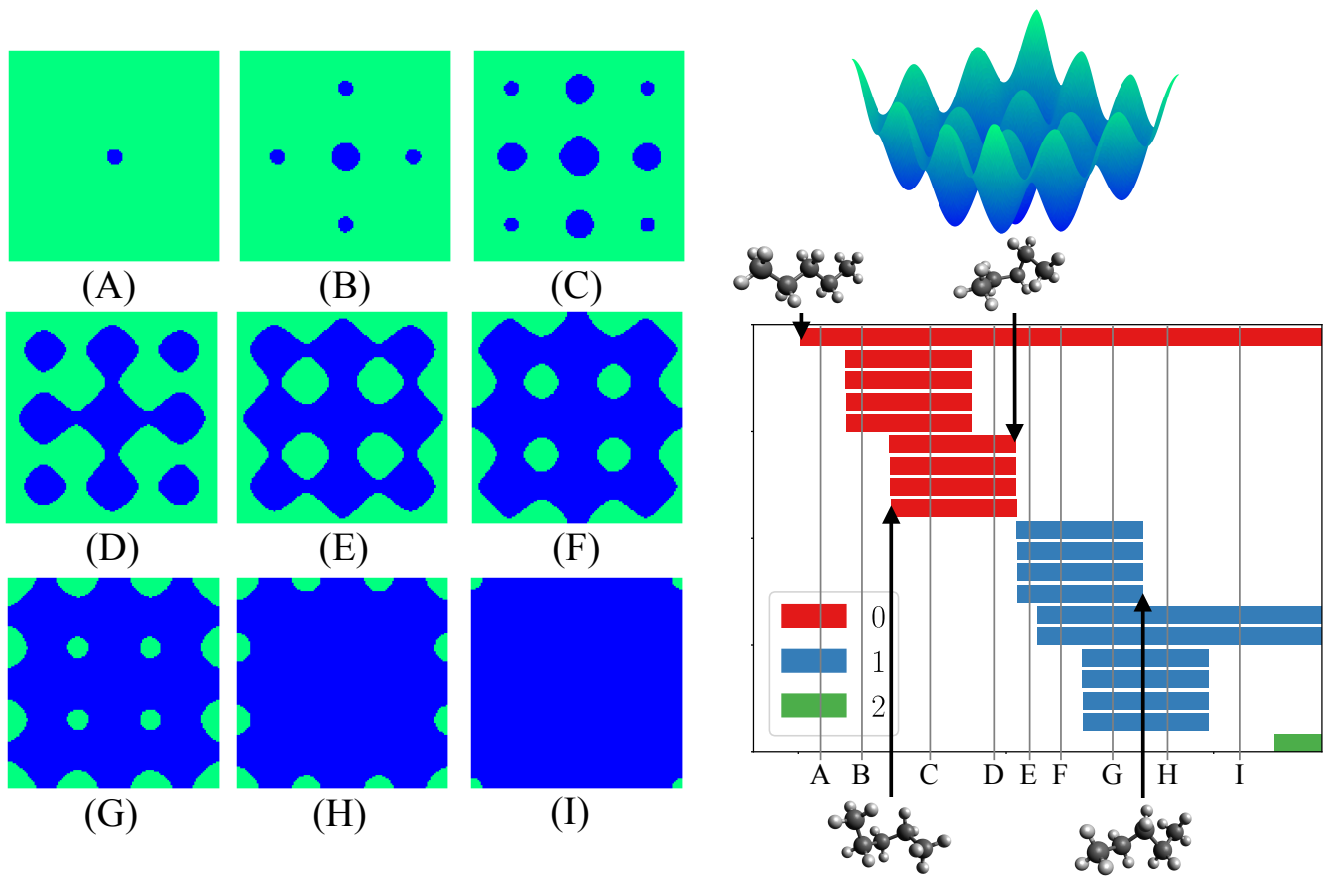}
\caption{
(Left) Pentane sublevelsets $f_2^{-1}(-\infty,r]:=\{y\in (S^1)^2~|~f_2(y)\le r\}$ are drawn, in green, for increasing values of energy value $r$.
(Right) The pentane sublevelset persistent homology of the analytical PEL.
Red bars are 0-dimensional features (connected components), blue bars are 1-dimensional features (loops), and the green bar is the lone 2-dimensional feature (the entire torus).
The $x$-axis is energy (kJ/mol).
}
\label{fig:pentane-sublevelsets}
\end{figure*}

We consider now the sublevelset persistent homology of the analytical PEL of pentane in Figure~\ref{fig:pentane-sublevelsets}.
Sublevelset (A) contains a single connected component, corresponding to the first 0-dimensional bar that is born in the persistent homology.
This configuration represents the global energy minimum of pentane, i.e.\ a fully stretched pentane. 
In sublevelset (B), four new connected components form, and then four more in (C), giving 8 new 0-dimensional persistent homology bars.
(B) and (C) represent two types of local minima configurations of pentane, with corresponding energies of 3.58 kJ/mol and 7.15 kJ/mol respectively.
Upon including four saddle points of index one, in (D) we reduce back down to four connected components.
The saddle points are configurations with $\phi_1 = \pi$ and $\phi_2 = \pi\pm2\pi/3$, or alternatively $\phi_1 = \pi\pm2\pi/3$ and $\phi_2 = \pi$, and an energy of 13.78 kJ/mol.
Upon passing eight more saddle points of index one, in (E) we obtain a connected space with four 1-dimensional holes.
We see that a critical point of index 1, i.e.\ a saddle point, can either correspond to the energy barrier connecting nearby local minima (these are the saddle points found, for example, by nudged elastic band), or it could alternatively correspond to the birth of a new 1-dimensional hole as measured by persistent homology.
As the sublevelsets continue to grow, a variety of other 1-dimensional holes are born.
In the transition from (G) to (H), we include four critical points of index 2, which each kill a 1-dimensional persistent homology bar.
These configurations are local maxima of pentane, with $\phi_1 = \pi\pm\pi/3$ and $\phi_2 = \pi\pm\pi/3$ and an energy of 27.57 kJ/mol, as shown in Figure~\ref{fig:pentane-sublevelsets}.
The final sublevelset (not shown) is the entire 2-dimensional torus, which has a single connected component, two 1-dimensional holes, and a single 2-dimensional hole.
The topological features in the final sublevelset correspond to the {\it semi-infinite bars} as they have no death times; all other bars are {\it finite}.

Now we explain how all critical points of all indices are obtainable from the sublevelset persistent homology computation.
Each local minimum, i.e.\ critical point of index 0, is represented by the birth of a 0-dimensional bar.
The death of a 0-dimensional bar corresponds to a critical point of index 1.
The above are the only critical points measured by disconnectivity graphs, but persistent homology encodes the remaining critical points of index 1 as births of 1-dimensional bars.
For example, in the transition from (D) to (E), eight new saddle points of index 1 are included, corresponding to four 0-dimensional bars ending (reducing from 5 connected components down to 1), and to four 1-dimensional bars that are born.
More generally, the number of critical points of index $k$ in the energy landscape is equal to the number of $k$-dimensional persistent homology bars, plus the number of finite (i.e., not semi-infinite) $(k-1)$-dimensional bars.
The energy values of these index $k$ critical points are given by the birth time of the corresponding $k$-dimensional bar, or alternatively the death time of the corresponding $(k-1)$-dimensional bar.

The length of each persistent homology bar measures the prominence of each topological feature.
This is clear for 0-dimensional bars; the prominence of local minimum is measured in the same way as in disconnectivity graphs.
However, persistent homology also measures the prominence of 1-dimensional features; see Figure~\ref{fig:MergeTreeVsPH2}.

\subsection{From Butane to Octane}
\label{ssec:butane-octane}

The analytical PEL for the alkanes are defined as follows.
We index the alkane C$_m$H$_{2m+2}$ by the number of dihedral angles $n=m-3$.
Each PEL is a function $V=f_n\colon (S^1)^n\to\R$ given in Equation~\eqref{eq:alkane-pel}, where $S^1$ is the circle encoding a dihedral angle, and where $(S^1)^n$ is the $n$-dimensional torus.
The PEL for butane is given by a real-valued function $f_1\colon S^1\to \R$ defined on the circle, shown in Figure~\ref{fig:EL-butane-pentane}A.
We see that $f_1$ has three local minima, of which one is global, and three local maxima, with the single global maxima pictured twice due to the periodic boundary conditions.
An equivalent definition of the alkane energy landscape function $f_n \colon (S^1)^n\to \R$ is given by $f_n(\phi_1, \ldots, \phi_n) =  f_1(\phi_1) + \ldots + f_1(\phi_n)$.
See Figure~\ref{fig:alkane-barcodes} for the sublevelset persistent homology barcodes of the analytical PELs of butane through octane.

\begin{figure*}
\centering
\includegraphics[width=1\textwidth]{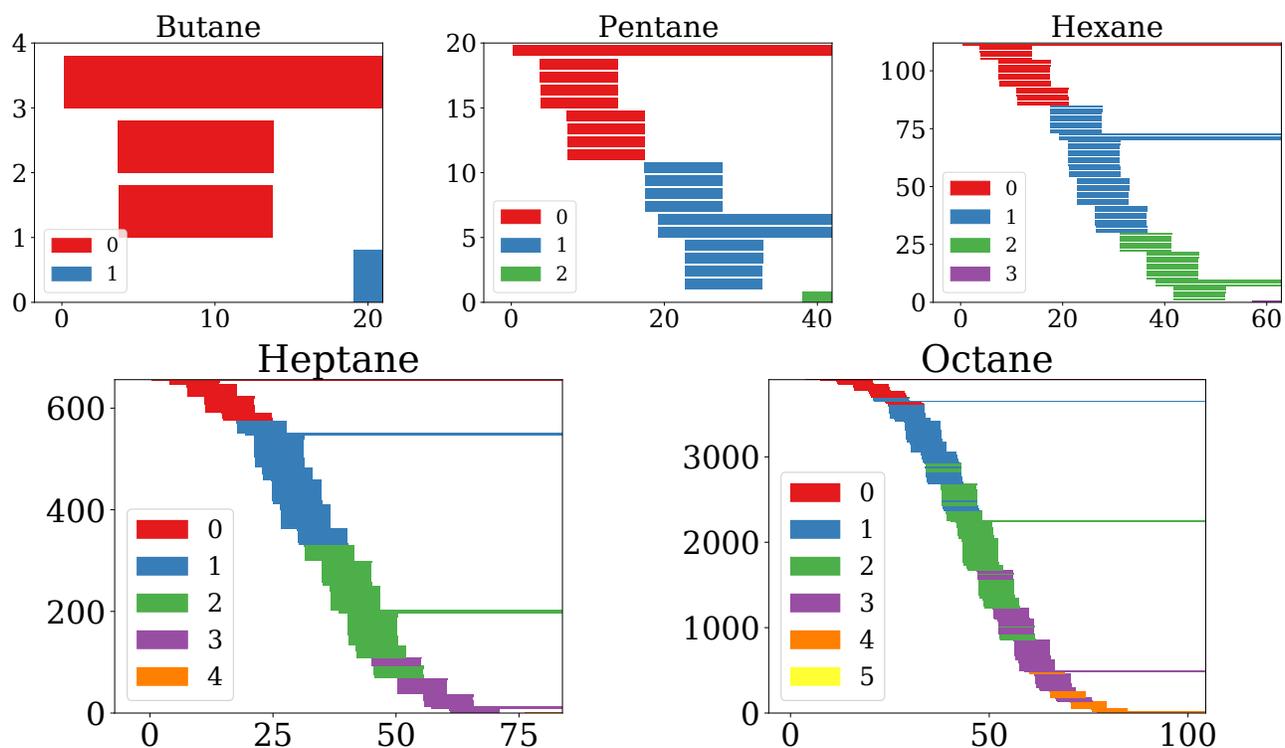}
\caption{
Persistent homology barcodes for the alkane analytical PEL, from butane ($n=1$) through octane ($n=5$).
The color of the bar indicates the homological dimension.
The $y$-axis counts the number of bars.
The $x$-axis is energy (kJ/mol).
}
\label{fig:alkane-barcodes}
\end{figure*}

\subsection{Sublevelset Persistent Homology of the PEL of Alkanes From MD Simulation}
\label{ssec:ua-aa}

\begin{figure*}
\centering
\includegraphics[width=1\textwidth]{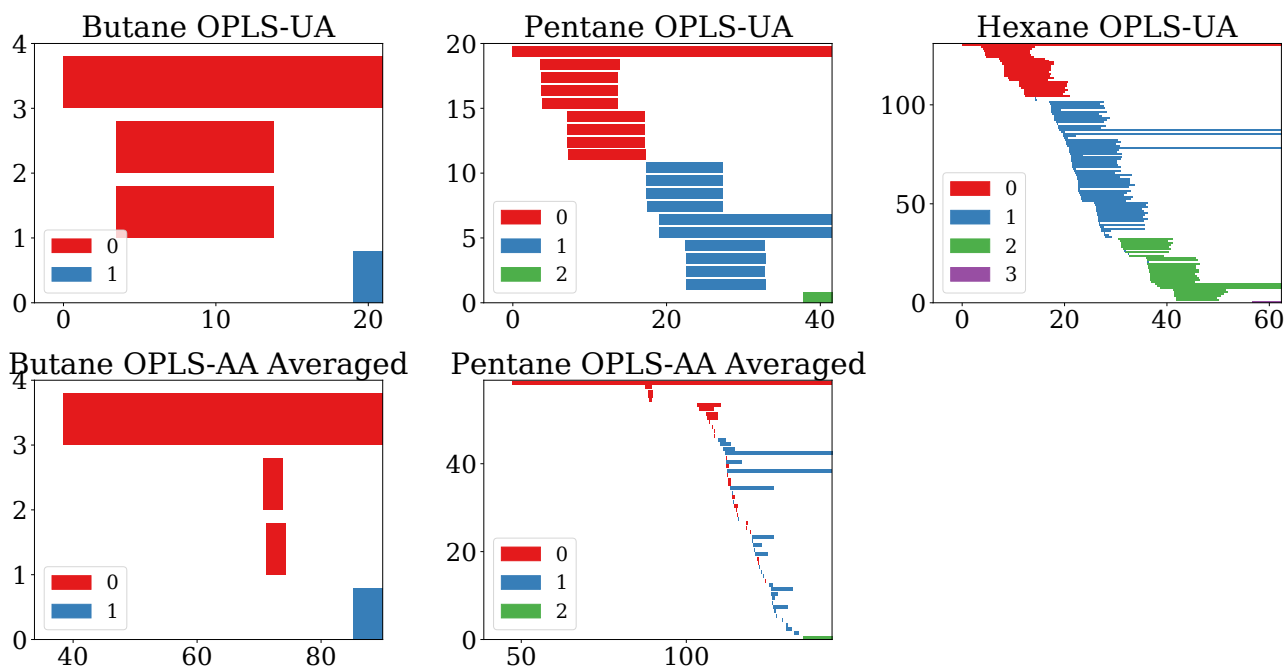}
\caption{
Sublevelset persistent homology figures with butane, pentane, hexane in the columns, and with the OPLS-UA and OPLS-AA PELs from MD simulations in the rows.
The $y$-axis counts the number of bars.
The $x$-axis is energy (kJ/mol).
We do not include the barcodes for hexane OPLS-AA because the H--C--H and H--C--C angle fluctuations cause too much noise in the PEL compared to the MD sampling density.
}
\label{fig:3x3}
\end{figure*}

To compute the sublevelset persistent homology of the MD data, specifically the energies of sampled configurations, we first computed Delaunay triangulations on the finite sample of the $n$-dimensional torus $(S^1)^n$.
Our triangulation is formed by quotienting out Delaunay triangulations computed on Euclidean space, avoiding the complexities of Delaunay triangulations in arbitrary Riemannian manifolds.\cite{boissonnat2018delaunay,boissonnat2018obstruction,leibon2000delaunay}

The first row of Figure~\ref{fig:3x3} displays the sublevelset persistent homology from butane to hexane, computed from the OPLS-UA MD simulation data.
Though sampling noise has been added, we note the similarity with the analytical OPLS-UA barcodes.
For butane and pentane, the MD simulation samplings are dense enough so that the OPLS-UA barcodes are nearly indistinguishable from the analytical barcodes.
For hexane, the MD simulation has small gaps in the sampling, which is reflected in the noise added to the hexane OPLS-UA barcodes.

The alkane OPLS-AA PEL includes two additional types of degrees of freedom related to the motions of the hydrogen atoms and represents an important counterpoint to the united atom (UA) PELs. These added terms to the intramolecular potential energy function that describe the molecule change both the absolute value of the potential energy and the topology of the potential energy landscape.
The first added degree of freedom is associated with H--C--H and H--C--C vibrations, which account for the majority of the increase in the absolute value of the potential energy within the OPLS-AA model relative to the united atom model.
The second additional degree of freedom is the rotation of the methyl groups, which have a smaller contributions to the increase in potential energy values.
Both of these degrees of freedom modify the topology of the energy landscape, either by altering the relative energies of basins and barriers, or by adding new features to the PEL, which becomes more complex as the alkane chain length is increased.
To assess the effect of the added degrees of freedom upon the configurational PEL associated with the dihedral angles within the OPLS-AA sampled MD data, we performed a sliding average operation over the above two degrees of freedom to produce a PEL as a function of only the dihedral angles, referred to as {\it OPLS-AA averages}. The new sublevelset barcodes for these PELs are shown in the second row of Figure~\ref{fig:3x3}.
We then compare the changes to the sublevelset persistence barcodes for the OPLS-AA averages and OPLS-UA systems.
For butane, the number of bars of each dimension is the same in both butane OPLS-UA and OPLS-AA, but the energy values corresponding to the births and deaths of these bars have been shifted higher as a result of the added potential energy terms caused by the H-atoms in the OPLS-AA description of the intramolecular interactions.
For pentane, the nine 0-dimensional bars in OPLS-UA are also apparent in the OPLS-AA barcode (with energy values shifted higher).
For 1-dimensional homology, there is a correspondence between the semi-infinite bars, but very weak correspondence between the finite 1-dimensional bars.
Despite the averaging performed on the OPLS-AA data (which should smooth out the effects of rapid H-atom vibrational motion), the OPLS-AA PEL is significantly noisier or rougher than the OPLS-UA PEL for pentane, indicating the effect of the added degrees of freedom upon the dihedral PEL.
Note too, that because of the added degrees of freedom within the MD sampling, it is more difficult to sample all configurations of the vibrational and rotational motion at each dihedral angle value, which lends itself to more sampling noise within the PEL.
The computational cost of MD sampling necessary to compare the OPLS-UA and OPLS-AA PEL within higher-chain alkanes was prohibitive, however the data illustrated in Figure~\ref{fig:3x3} are indicative of the rapid increase in topological complexity that occurs when adding degrees of freedom to even simple systems such as the alkanes.

\subsection{Topological Distances Between ELs}
\label{ssec:distance}

In addition to identifying qualitative differences between ELs, say under a change to the chemical environment, it is also important to be able to measure quantitative distances between ELs.
We are using the word \emph{distance} in the mathematical sense: a distance is a function that accepts two ELs as input, and returns a nonnegative real number which measures how close the two ELs are. Small distances correspond to nearby ELs, and large distances correspond to ELs that are far apart.

One class of distances may be defined on the ELs themselves.
For example, if two ELs have the same domains (input spaces), then one can compute the $L^\infty$ (maximum difference in energy values) or $L^1$ (average difference in energy values) between these two ELs.
Another example of distance between ELs would be the earth mover's or Wasserstein distance,\cite{rubner2000earth} which roughly speaking computes how much physical work would be required to transform one EL into the other.
Notions of earth mover's distance between graph representations of ELs have been previously studied.\cite{cazals2015conformational}

We propose a second class of distances on ELs, which are defined on the sublevelset persistent homology barcodes of the ELs.
Two distances in this class are the {\it bottleneck} and {\it Wasserstein distances} between persistence barcodes.\cite{cohen2007stability,cohen2010lipschitz}
Intuitively, these distances are computed by minimizing the costs of matching one persistence barcode with the other.
For any matching between intervals in one barcode with intervals in another barcode, we can compute the {\it cost} of this matching.
For the bottleneck distance, the cost of a matching is defined to be the largest cost between intervals that are matched, whereas the 1-Wasserstein distance is defined to be the {\it sum} of all of the costs between matched intervals.
The value of the bottleneck distance and the 1-Wasserstein distance is then defined to be the minimum cost over all matchings.

An advantage of defining distances between ELs using persistent homology is that ELs of different dimensionality can be compared in this way.
Indeed, any EL, regardless of its dimension, has sublevelset persistence barcodes, and any two such barcodes can be compared regardless of the dimensionalities of the ELs from which they came.

To investigate the robustness of sublevelset persistent homology on energy landscapes, for each alkane, we may quantify the differences between any two of its three different representations: analytic, OPLS-UA, and OPLS-AA averaged. 
We conduct this analysis by mainly focusing on analytical pentane and pentane OPLS-UA, but also touch on hexane.
We do not perform these quantitative comparisons between OPLS-UA and OPLS-AA Averaged due to their large differences in energy.
When comparing analytical pentane with pentane OPLS-UA, our MD simulations allows for sufficient sampling of the PEL and in return, the barcodes between pentane-UA and analytical pentane are essentially the same (see Figures~\ref{fig:alkane-barcodes} and~\ref{fig:3x3}). 
Therefore, it is not surprising that the bottleneck distance between analytic pentane and pentane OPLS-UA persistence barcodes is small (Table~\ref{tbl:distances}). 
In particular, the bottleneck distance is equal to $0.3$ for $0$-dimensional homology and $0.28$ for $1$-dimensional homology.
However, in the case of hexane OPLS-UA, insufficient sampling of its energy landscape leads to a barcode with shorter bars and delayed births (see Figure~\ref{fig:3x3}).
Thus, the bottleneck distance between analytic hexane and hexane OPLS-UA persistence barcodes is larger.

In comparison to the bottleneck distance, we find that the 1-Wasserstein distance between an analytic alkane and its respective OPLS-UA version serves as a finer quantitative measure of the differences in the EL associated with sampling by MD.
For example, the relatively similar values of the 0- and 1-dimensional bottleneck distances, even in the case of hexane, indicates that the bottleneck distance does not distinguish between poor sampling around a {\it single} local maxima as opposed to bad sampling around {\it many} local maxima.
In contrast, the 1-Wasserstein distance exhibits very large differences between the 0- and 1-dimensional homology, particularly in hexane, where the sampling of regions associated with the transitions between different minima is much poorer than in pentane example. 

\begin{table}
\begin{footnotesize}
\label{tbl:distances}
\begin{tabular}{|c||c|c|} \hline \hline 
 & Bottleneck Distance & \\
\hline
\hline
& $0$-dim.\ homology & $1$-dim.\ homology  \\
\hline
Butane & 0.24 & 0.00 \\ \hline
Pentane & 0.30 & 0.28 \\ \hline
Hexane & 5.03 & 3.60 \\ \hline
\hline
& Wasserstein Distance  & \\
\hline
\hline
& $0$-dim.\ homology & $1$-dim.\ homology  \\ \hline 
 Butane   & 0.45             & 0.00            \\ \hline
 Pentane  & 3.11             & 2.35          \\ \hline
 Hexane   & 76.55            & 160.57           \\ \hline \hline
\end{tabular}
\end{footnotesize}
\caption{
The bottleneck and 1-Wasserstein distances between the persistent barcodes derived from the analytic PEL (Equation~\eqref{eq:alkane-pel}) and sampled PEL from the OPLS-UA molecular dynamics simulations.
}
\label{table:wasserstein}
\end{table}

\section{Mathematical Characterization of the Sublevelset Persistent Homology of Alkanes}
\label{sec:characterization}

In Theorem~\ref{thm:ph-k-main} we provide a complete mathematical characterization of the sublevelset persistent homology of the alkanes, in all homological dimensions, and for any number of carbons in the chain.
We obtained this theorem by computing the sublevelset persistent homology barcodes in Figure~\ref{fig:alkane-barcodes}, conjecturing a formula for how the barcodes would look for any number of carbons in the chain, and then rigorously proving this formula using the connection between Morse theory and persistent homology.
Our proof proceeds by understanding the critical points of the alkanes and their indices; from there we can recover the persistent homology.
In applications, this process will often go in reverse --- computed sublevelset persistent homology may be used as a summary of the critical points in the PEL (perhaps even vectorized for use in machine learning tasks.)

\begin{theorem}
\label{thm:fn-morse}
The function $f_n\colon (S^1)^n\to \R$ is a Morse function with $6^n$ critical points.
Furthermore, $f_n$ has $3^n{n \choose k}$ critical points of index $k$.
\end{theorem}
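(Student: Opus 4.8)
The plan is to exploit the additive structure $f_n(\phi_1,\dots,\phi_n) = f_1(\phi_1) + \cdots + f_1(\phi_n)$ recorded just above the theorem, which collapses the whole problem onto a single-variable analysis of $f_1\colon S^1\to\R$. The key tool I would invoke is the standard Morse-theoretic fact that a sum of Morse functions on a product manifold is again Morse: if $h(x_1,\dots,x_n)=\sum_i h_i(x_i)$ with each $h_i$ Morse, then $\nabla h$ vanishes exactly when every $\nabla h_i$ does, the Hessian of $h$ at such a point is the diagonal matrix $\mathrm{diag}\bigl(h_1''(x_1),\dots,h_n''(x_n)\bigr)$, and the Morse index of $h$ is the sum of the indices of the $h_i$ at their respective coordinates. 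Granting this, the critical points of $f_n$ are precisely the $n$-tuples of critical points of $f_1$, their indices add, and so the entire theorem reduces to determining the critical-point data of $f_1$.

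Next I would carry out that single-variable analysis. Differentiating and using $\sin 2\phi = 2\sin\phi\cos\phi$ and $\sin 3\phi = \sin\phi\,(4\cos^2\phi-1)$, one factors
\[
f_1'(\phi)=\sin\phi\,\bigl[(3c_3-c_1)+4c_2\cos\phi-12c_3\cos^2\phi\bigr],
\]
so with $u=\cos\phi$ the critical points split into the two solutions of $\sin\phi=0$, namely $\phi\in\{0,\pi\}$, together with the preimages under $\cos$ of the roots of the quadratic $g(u)=-12c_3u^2+4c_2u+(3c_3-c_1)$. I would then plug in the stated coefficients $c_1,c_2,c_3$ to check that $g$ has two real roots $u_1,u_2$, both lying in the open interval $(-1,1)$ and distinct from $\pm 1$; each such root contributes the pair $\phi=\pm\arccos(u_i)$, yielding four further critical points and $6$ in total. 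Because these are all simple zeros of $f_1'$ (the discriminant of $g$ is nonzero and $\sin\phi\neq 0$ at the interior roots, while $g(\pm1)\neq 0$ at $\phi=0,\pi$), we get $f_1''\neq 0$ at every critical point, so $f_1$ is Morse; and since extrema of a Morse function on $S^1$ must alternate, the $6$ critical points are exactly $3$ minima (index $0$) and $3$ maxima (index $1$).

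Assembling these pieces gives the theorem. The diagonal Hessian of $f_n$ has entries $f_1''(\phi_i)$, all nonzero by the above, so $f_n$ is Morse; its $6^n$ critical points are the $n$-tuples of critical points of $f_1$; and the index of such a point equals the number of coordinates sitting at a maximum of $f_1$. To count those of index $k$, I choose which $k$ of the $n$ coordinates are maxima ($\binom{n}{k}$ ways), assign one of the $3$ maxima to each of these and one of the $3$ minima to each of the remaining $n-k$, giving $\binom{n}{k}\,3^{k}\,3^{n-k}=3^n\binom{n}{k}$ critical points of index $k$. As a consistency check, $\sum_{k=0}^{n}3^n\binom{n}{k}=3^n 2^n=6^n$ recovers the total count.

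The only genuinely non-formal step is the explicit verification that the quadratic $g$ has exactly two roots in $(-1,1)$ and that all six critical points of $f_1$ are nondegenerate; everything downstream is the product lemma plus a binomial count. I expect this root-location computation to be the main (though routine) obstacle, and I would either execute it directly with the given numerical coefficients or, alternatively, set up the product lemma as an induction on $n$ so that the base case $n=1$ isolates precisely this single-variable claim.
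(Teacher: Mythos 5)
Your proposal is correct and takes essentially the same route as the paper's proof: critical points of $f_n$ are exactly the $n$-tuples of critical points of $f_1$, the Hessian at such a point is diagonal with entries $f_1''(\phi_i)$, the index equals the number of coordinates sitting at a maximum of $f_1$, and the count $3^n\binom{n}{k}$ follows by choosing which $k$ coordinates are maxima and then one of three maxima (resp.\ minima) per coordinate. The only difference is that you carry out the single-variable calculus (factoring $f_1'$ and locating the roots of the quadratic in $\cos\phi$) to establish that $f_1$ is Morse with three minima and three maxima, a fact the paper simply reads off from the definition and the plot of the butane PEL---so your version is, if anything, more self-contained.
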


Indeed, the butane energy landscape in Figure~\ref{fig:EL-butane-pentane}A is a Morse function\cite{milnor2016morse} with three minima and three maxima.
The energy function $f_n\colon (S^1)^n\to \R$ is defined by summing butane functions together,
namely $f_n(\phi_1, \ldots, \phi_n) =  f_1(\phi_1) + ... + f_1(\phi_n)$.
For this reason, the restriction of the pentane PEL to any horizontal or vertical slice produces a translated copy of the butane PEL.
Similarly, the restriction of the hexane PEL to any coordinate-aligned line produces a translated copy of the butane PEL, and the restriction of the hexane PEL to any coordinate-aligned plane produces a translated copy of the pentane PEL.
This structure implies that $f_n$ is also a Morse function with $3^n{n \choose k}$ critical points of index $k$.
Indeed, the $n$ choose $k$ term ${n \choose k}:=\frac{n!}{k!\cdot (n-k)!}$ arises because for $(\phi_1,\ldots,\phi_n)\in (S^1)^n$ to be a critical point of index $k$, we must select $k$ of the angles $\phi_i\in S^1$ to be minima of the butane PEL, and the remaining $n-k$ coordinates maxima for butane.
The term $3^n$ arises because for each of the $\phi_i$, we have three maxima (or minima) of butane to choose from.
The correspondence between Morse theory and sublevelset persistent homology will now allow us to describe the persistent homology bars.
Indeed, each $k$-dimensional bar in the persistent homology barcode has a birth time corresponding to the energy value of a critical point of index $k$, and a death time corresponding to the energy value of a critical point of index $k+1$.

\begin{theorem}
\label{thm:ph-k-main}
The sublevelset persistent homology of the analytical alkane potential energy landscape $f_n \colon (S^1)^n\to \R$ has ${n \choose k}+(3^n-1){n-1 \choose k}$ persistent homology bars in dimension $k$.
\end{theorem}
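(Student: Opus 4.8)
The plan is to extract all bar counts directly from the critical-point data of Theorem~\ref{thm:fn-morse}, combined with the Morse-to-persistence correspondence recalled in Section~\ref{sec:results-dictionary}. Write $c_k = 3^n {n \choose k}$ for the number of index-$k$ critical points of $f_n$, and let $B_k$, $F_k$, and $I_k$ denote, respectively, the total number, the number of finite, and the number of semi-infinite $k$-dimensional bars, so that $B_k = F_k + I_k$. The correspondence asserts that each index-$k$ critical point is paired either as the birth of a $k$-dimensional bar or as the death of a finite $(k-1)$-dimensional bar; counting both possibilities yields the key identity $c_k = B_k + F_{k-1}$, which is the engine of the whole argument.

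First I would pin down the semi-infinite bars. The largest sublevelset is all of $(S^1)^n$, so the essential (never-dying) classes are exactly the homology of the $n$-torus, giving $I_k = \dim H_k\big((S^1)^n\big) = {n \choose k}$. Substituting $F_{k-1} = B_{k-1} - I_{k-1}$ into $c_k = B_k + F_{k-1}$ then produces the recursion
\[
B_k = c_k - B_{k-1} + I_{k-1} = 3^n {n \choose k} - B_{k-1} + {n \choose k-1},
\]
with the convention $B_{-1}=0$ and base case $B_0 = c_0 = 3^n$, recording that every minimum creates a connected component and nothing of index $-1$ destroys it.

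The remaining step is to solve the recursion, which I would do by induction on $k$, verifying that the claimed formula $B_k = {n \choose k} + (3^n-1){n-1 \choose k}$ satisfies it. It suffices to check the rearranged form $B_k + B_{k-1} = 3^n {n \choose k} + {n \choose k-1}$: substituting the proposed formula, the plain binomial terms combine to ${n \choose k} + {n \choose k-1}$, while the $(3^n-1)$-weighted terms collapse by Pascal's identity ${n-1 \choose k} + {n-1 \choose k-1} = {n \choose k}$ to $(3^n-1){n \choose k}$; adding these gives exactly $3^n {n \choose k} + {n \choose k-1}$. Since the base case also matches, the induction closes. (As a consistency check, summing the formula over $k$ gives $2^{n-1}(3^n+1)$ total bars, which agrees with the count of creators among the $6^n$ critical points.)

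The hard part will not be the combinatorics but the degeneracy of critical values: $f_n$ is highly symmetric, so many critical points share the same energy, and the naive ``one critical point per critical value'' picture of Morse theory fails. The remedy is to observe that the identity $c_k = B_k + F_{k-1}$ is really a statement about the filtered Morse (CW) complex, in which the index-$k$ critical points are the $k$-cells: in the boundary-matrix reduction, each index-$k$ cell is either a creator (contributing to $B_k$) or a destroyer (in bijection with the finite $(k-1)$-bars counted by $F_{k-1}$), and these counts are invariants of the persistence module, independent of any tie-breaking chosen to totally order cells of equal filtration value. Equivalently, one may perturb $f_n$ slightly to separate all critical values without changing the number of bars in any dimension. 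This is the one place genuine care is required, and with it settled the rest of the proof is routine.
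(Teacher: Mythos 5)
Your proposal is correct and takes essentially the same approach as the paper: both rest on the critical-point count $3^n{n \choose k}$ from Theorem~\ref{thm:fn-morse}, the Morse-theory/persistence pairing of Lemma~\ref{lem_MorsePersistence} (each index-$k$ critical point is either a birth of a $k$-dimensional bar or the death of a finite $(k-1)$-dimensional bar), the torus homology ${n \choose k}$ for the semi-infinite bars, and an induction on $k$ closed by Pascal's identity. Even your concluding remark about degenerate critical values matches the paper, which resolves the same issue by a small perturbation inside its proof of Lemma~\ref{lem_MorsePersistence}.
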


See Table~\ref{table:ph}.
The number of bars that are semi-infinite is ${n \choose k}$, and the remaining $(3^n-1){n-1 \choose k}$ bars all have the same finite length which is equal to the energy differential between the non-global maxima and the non-global minima in the energy function for butane.
Furthermore, in Appendix~\ref{app:ph-alkanes} we give the birth values for these bars.
This provides a complete characterization of the sublevelset persistent homology of the analytical alkane potential energy landscapes.

\begin{table}
\begin{footnotesize}
\begin{tabular}{|c||c|c|c|c|c|c|c|} \hline & $k=0$ & 1 & 2 & 3 & 4 & 5 & \ldots  \\ \hline \hline
 $n=1$   & 3             & 1             & 0             & 0             & 0             & 0             & \ldots            \\ \hline
 2   & 9             & 10            & 1             & 0             & 0             & 0             & \ldots           \\ \hline
 3   & 27            & 55            & 29            & 1             & 0             & 0             & \ldots          \\ \hline
 4   & 81            & 244           & 246           & 84            & 1             & 0             & \ldots          \\ \hline
 5   & 243           & 973           & 1462          & 978           & 247           & 1             & \ldots         \\ \hline
\vdots & \vdots & \vdots & \vdots & \vdots & \vdots & \vdots & \\ \hline
$n$   & $3^n$      & $3^n(n-1)+1$          & ${n \choose 2}+(3^n-1){n-1 \choose 2}$           & \ldots           & \ldots           & \ldots           & \\ \hline
\end{tabular}
\end{footnotesize}
\caption{
We display the number of $k$-dimensional bars in the sublevelset persistent homology of $f_n \colon (S^1)^n\to \R$; compare with Figure~\ref{fig:alkane-barcodes}.
The entry in row $n$ and column $k$ is equal to ${n \choose k}+(3^n-1){n-1 \choose k}$.}
\label{table:ph}
\end{table}

The main idea behind the proof of Theorem~\ref{thm:ph-k-main} is as follows.
The birth and death values in the sublevelset persistent homology of a Morse function correspond precisely to its critical values.
Since $f_n$ has $3^n$ critical points of index 0, we know that there are $3^n={n \choose 0}+(3^n-1){n-1 \choose 0}$ bars in its 0-dimensional sublevelset persistent homology, as desired.
We can now proceed by induction on $k$, which means that we assume that $f_n$ has ${n \choose k-1}+(3^n-1){n-1 \choose k-1}$ bars of dimension $k-1$ (and we must show that $f_n$ has ${n \choose k}+(3^n-1){n-1 \choose k}$ bars of dimension $k$).
The torus $S^n$ has $(k-1)$-dimensional homology of rank ${n \choose k-1}$, producing ${n \choose k-1}$ semi-infinite bars.
This leaves $(3^n-1){n-1 \choose k-1}$ finite bars in the $(k-1)$-dimensional persistent homology.
Recall that $f_n$ has $3^n{n \choose k}$ critical points of index $k$.
Of these, $(3^n-1){n-1 \choose k-1}$ correspond to death times of the finite $(k-1)$-dimensional bars.
The remaining critical points of index $k$ that do not kill $(k-1)$-dimensional bars instead give birth to $k$-dimensional bars.
So as desired, the number of $k$-dimensional persistent homology bars is the difference
\begin{align*}
\textstyle
3^n{n \choose k} - (3^n-1){n-1 \choose k-1}
&=
\textstyle
{n \choose k}  + (3^n-1)\left({n \choose k}-{n-1 \choose k-1}\right) \\
&=
\textstyle
{n \choose k}  + (3^n-1){n - 1 \choose k}.
\end{align*}
This proof by induction suffices to count the number of persistent homology bars in each homological dimension.
A more detailed argument, relying on Morse theory\cite{milnor2016morse} and a K\"{u}nneth formula for persistent homology,\cite{GakharPerea_KunnethFormula} is required to show that all finite bars have the same length and to determine the birth times of all bars; see Appendix~\ref{app:ph-alkanes}.

\section{Conclusions}
We propose sublevelset persistent homology as a compact representation of energy landscapes that measures more geometric and topological information than a disconnectivity graph.
In particular, 1-dimensional sublevelset persistent homology encodes the energy barriers between nearby transition paths (in analogy with how disconnectivity graphs encode the energy barriers between nearby local minima).
We have chosen the alkanes as our case study model, and have derived a complete formula for their sublevelset persistent homology barcodes for any number of carbons in the chain.
Indeed, the alkanes with $n + 3$ carbons have ${n \choose k} + (3^n - 1){n - 1 \choose k}$ persistent homology bars in homological dimension $k$.
When changing from this analytical model to the OPLS-UA MD simulation data, sampling noise is added to the energy landscape.
As nearby energy landscapes have nearby persistent homology barcodes, we use persistent homology to quantify a notion of topological distance between the two PELs.
A mathematical understanding of the topology of the alkane PEL provides a useful new tool to relate physicochemical properties and configurational phase space.

\section*{Dedication}
The authors dedicate this work to Professors Vesta Coufal, JoAnne Peters, Erica Flapan, and Weibing Gu, whose exceptional mentorship, creativity, and interdisciplinary research have helped bridge the domains of chemistry, physics, computer science and topology to expand our understanding of chemical structure and reactivity.

\section*{Data Availability}
The data that support the findings of this study are available from the corresponding authors upon reasonable request.

\begin{acknowledgments}
  \vspace*{-0.15in}
This material is based upon work supported by the National Science Foundation under Grant No.\ 1934725.
We would also like to acknowledge XSEDE,\cite{towns2014xsede} SiGap hosting services,\cite{pierce2018supporting} and the Apache Airavata gateway middleware framework.\cite{pierce2014apache}
\end{acknowledgments}


\vspace*{-0.2in}
\section*{References}
\vspace*{-0.2in}

\input{AlkaneFirstPaper-JCPtemplate.bbltex}

\appendix

\vspace*{-0.05in}

\section{Proofs of Theorems~\ref{thm:fn-morse} and~\ref{thm:ph-k-main}}\label{app:num-bars}

In this appendix we derive and prove Theorems~\ref{thm:fn-morse} and~\ref{thm:ph-k-main}, which show that the alkane PEL $f_n \colon (S^1)^n\to \R$ is a Morse function with known critical points, and which count the number of bars in the sublevelset persistent homology barcodes. 

\vspace*{-0.05in}
\subsection{Critical Points of the Alkane PEL}
\vspace*{-0.05in}

A smooth function $f\colon M\to \R$ from a manifold $M$ to the real numbers is \emph{Morse} if it has no degenerate critical points, i.e.\ if the Hessian matrix at each critical point is nonsingular.\cite{milnor2016morse}
The \emph{index} of a critical point, roughly speaking, is the number of linearly independent directions in which one can move and have the value of $g$ decrease; more formally the index is the number of negative eigenvalues in the Hessian at this critical point.

From its definition in Section~\ref{ssec:analytical}, or from its image in Figure~\ref{fig:EL-butane-pentane}A, one can see that the butane energy landscape function $V\colon S^1\to \R$ is a Morse function with six critical points: three local minima and three local maxima.
The alkane energy landscape function $f_n \colon (S^1)^n\to \R$, defined by $f_n(\phi_1,\dots , \phi_n) =  V(\phi_1) + \cdots + V(\phi_n)$, decomposes in a way that allows us to identify all of its critical points, and to furthermore describe their indices.

\begin{proof}[Proof of Theorem~\ref{thm:fn-morse}]
We must show that $f_n\colon (S^1)^n\to \R$ is a Morse function with $3^n{n \choose k}$ critical points of index $k$.
Taking partial derivatives, we see that $\frac{\partial f_n}{\partial \phi_i} = \frac{\partial V(\phi_i)}{\partial \phi_i}$.
Therefore the gradient of $f_n$ is the zero vector if and only if each $\phi_i$ is a critical point of $V$. Since $V$ has 6 critical points, it immediately follows that $f_n$ has $6^n$ critical points.

Let $(\phi_1,\dots,\phi_n)$ be a critical point of $f_n$.
Then, the Hessian of $f_n$ at this point is a diagonal matrix with one positive entry on the diagonal for each $\phi_i$ that is a local minimum of $V$, and with one negative entry on the diagonal for each $\phi_i$ that is a local maximum of $V$.
Since all critical points are non-degenerate, $f_n$ is a Morse function.
Furthermore, the index of a critical point $(\phi_1,\dots,\phi_n)$ is the number of $\phi_i$ points that are local maxima of $V$.
There are ${n \choose k}$ ways to choose the $k$ indices for maxima out of the $n$ slots, and for each $\phi_i$ that is a maximum (resp.\ minimum) we have $3$ possible maxima (resp.\ minima) of $V$ to choose from.
Hence the number of critical points of index $k$ is $3^n{n \choose k}$.
\end{proof}

\vspace*{-0.1in}
\subsection{Connection between Morse Theory and Sublevelset Persistence}

The following is an important and well-known lemma in persistent homology that we will reprove in Appendix~\ref{app:morse-complex}.

\begin{lemma}\label{lem_MorsePersistence}
If $f \colon M \to \R$ is a Morse function, then the birth and non-infinite death times in the sublevelset persistent homology correspond precisely to the critical points of $f$.
Each $k$-dimensional bar has birth time corresponding to a critical point of index $k$, and death time either equal to infinity or otherwise corresponding to a critical point of index $k+1$.
Furthermore, the number of semi-infinite bars in dimension $k$ is given by the $k$-dimensional homology of $M$.
\end{lemma}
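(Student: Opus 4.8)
The plan is to invoke the two fundamental theorems of Morse theory and then read off the persistent homology from a cellular analysis of the sublevelsets. First I would recall that, because $f$ is Morse on a compact manifold (as holds for the alkane tori $M=(S^1)^n$), its critical values are isolated and finite in number, say $c_1 < c_2 < \cdots < c_N$. The first fundamental theorem states that if the interval $[a,b]$ contains no critical value, then $M^a := f^{-1}((-\infty,a])$ is a deformation retract of $M^b := f^{-1}((-\infty,b])$; hence the inclusion $M^a \hookrightarrow M^b$ induces isomorphisms on homology in every degree. Consequently the homology of the sublevelsets, and the inclusion-induced maps between them, are constant on each interval between consecutive critical values, so every birth and every finite death must occur at one of the critical values $c_j$. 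This already establishes that births and finite deaths correspond to critical points.

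Next I would analyze what happens as $r$ increases past a single critical value. The second fundamental theorem of Morse theory says that if $c$ is a critical value realized by a single critical point of index $k$, then for small $\varepsilon > 0$ the sublevelset $M^{c+\varepsilon}$ is homotopy equivalent to $M^{c-\varepsilon}$ with a single $k$-cell $e^k$ attached along its boundary sphere $S^{k-1} \to M^{c-\varepsilon}$. Writing $A = M^{c-\varepsilon}$ and $B \simeq A \cup e^k$, the long exact sequence of the pair $(B,A)$ together with $H_*(B,A) \iso H_*(D^k, S^{k-1})$ reduces, over a field $\k$, to
\begin{align*}
0 \to H_k(A) \to H_k(B) \to \k \xrightarrow{\partial} H_{k-1}(A) \to H_{k-1}(B) \to 0,
\end{align*}
with all other degrees unchanged. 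The connecting map $\partial$ sends the generator to the homology class of the attaching sphere, so there are exactly two cases. If $\partial = 0$, then $\dim H_k(B) = \dim H_k(A) + 1$ while $H_{k-1}$ is unchanged, so a new $k$-dimensional bar is born at the index-$k$ critical point. If $\partial$ is injective, then $H_k$ is unchanged while $\dim H_{k-1}(B) = \dim H_{k-1}(A) - 1$, so a $(k-1)$-dimensional bar dies. Reading this across all critical points, the birth of every $k$-dimensional bar is an index-$k$ critical point, and the finite death of every $k$-dimensional bar is an index-$(k+1)$ critical point, as claimed.

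For the semi-infinite bars I would take $r$ larger than $c_N$, so that $M^r = M$. The bars still alive at the top of the filtration are precisely the semi-infinite ones, and the number of $k$-dimensional bars covering any parameter value equals the rank of $H_k$ of the corresponding sublevelset. At the top stage this rank is $\dim H_k(M)$, so the number of semi-infinite bars in dimension $k$ equals the $k$-dimensional homology of $M$.

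The hard part will be the genericity assumption that each critical value is realized by a \emph{single} critical point, which fails for the highly symmetric landscapes $f_n$ where many critical points share a value. I would resolve this in one of two ways. One option is to attach all cells at a shared critical value simultaneously: the relative homology $H_*(M^{c+\varepsilon}, M^{c-\varepsilon})$ is then free of rank equal to the number of index-$k$ critical points at level $c$, and the same long-exact-sequence bookkeeping splits the attachment into that many independent births or deaths. The alternative is to perturb $f$ to a nearby Morse function with distinct critical values, carry out the single-cell argument, and transfer the bar count back using the stability of persistent homology as the perturbation shrinks to zero. Either route is routine once the single-cell case above is in hand.
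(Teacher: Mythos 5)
Your proof is correct, but it follows a genuinely different route from the paper's. You work at the level of spaces: Milnor's cell-attachment theorem says $M^{c+\varepsilon}\simeq M^{c-\varepsilon}\cup e^k$, and the long exact sequence of the pair, with $H_*(B,A)\cong H_*(D^k,S^{k-1})$, forces exactly one of two outcomes (birth of a $k$-bar when the attaching class $\partial$ vanishes, death of a $(k-1)$-bar when it is injective). The paper instead works at the chain level: it filters the Morse (Witten) complex $C_\bullet(f)$ generated by critical points, with boundary maps counting gradient flow lines mod $2$, and does the same dichotomy by tracking cycles and boundaries when a single generator $p_i$ is adjoined. The two bookkeeping arguments are parallel, but the payoffs differ. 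Your space-level argument is more self-contained: at the top of the filtration the sublevelset literally equals $M$, so the semi-infinite bar count is immediate, whereas the paper must invoke the isomorphism between Morse homology and singular homology at that step. Conversely, the paper's chain-level proof is what yields its Corollaries B.2 and B.3 ($\partial p=0$ implies $p$ is a birth; if the bar of $q$ dies at $p$ then $q\in\partial p$), which are then essential for the explicit butane and pentane computations and the elder-rule arguments in Appendix C; your approach would not produce those chain-level statements without additional work. Finally, both proofs must address coincident critical values, which occur for the alkane landscapes: the paper perturbs and accepts the loss of a canonical pairing, while you offer both simultaneous attachment (where $H_*(M^{c+\varepsilon},M^{c-\varepsilon})$ is free on all cells at that level, and the rank of the connecting map splits the count into births and deaths) and perturbation plus stability; either is adequate for the counting statement in the lemma, and your simultaneous-attachment variant is arguably cleaner than the paper's appeal to genericity.
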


This lemma shows that the sublevelset persistent homology of a Morse function is closely tied to the critical points of that function.
But in addition to counting the number of critical points of each index, sublevelset persistent homology also shows how the critical points are related to each other (often paired up).

In the case of the alkane PEL $f_n\colon (S^1)^n\to \R$, the number of semi-infinite bars is easy to derive. 
Indeed, once the energy barrier $r$ is larger than the energy value of the global maximum of $f_n \colon (S^1)^n\to \R$, the sublevelsets $f_n^{-1}(-\infty,r]$ are all equal to the $n$-dimensional torus $(S^1)^n$.
The homology groups of the $n$-torus are well understood: the $k$-th homology group $H_k((S^1)^n)$ of the $n$-torus has rank equal to the binomial coefficient ${n \choose k}$.
So the number of semi-infinite bars in the $k$-dimensional persistent homology of $f_n\colon (S^1)^n \to\R$ is ${n \choose k}$.
For example, the number of semi-infinite bars in the persistent homology of pentane ($n=2$), as $k$ increases from 0 to 2, are ${2 \choose 0}=1$, ${2 \choose 1}=2$, ${2 \choose 2}=1$.
The number of semi-infinite bars in the persistent homology of hexane ($n=3$), as $k$ increases from 0 to 3, are ${3 \choose 0}=1$, ${3 \choose 1}=3$, ${3 \choose 2}=3$, ${3 \choose 3}=1$.
For heptane this list is 1, 4, 6, 4, 1, and for octane this list is 1, 5, 10, 10, 5, 1; these numbers are given by the $n$-th row of Pascal's triangle.

\subsection{Proofs of the Number of Bars}\label{sec:proofs}

We can now use the connections between Morse theory and persistent homology to count the number of persistent homology bars, as shown in Table~\ref{table:ph}.
The total number of bars, in all homological dimensions, is given by the following theorem. 

\begin{theorem}\label{thm:ph-total}
The sublevelset persistent homology of the alkane energy landscape $f_n \colon (S^1)^n\to \R$ has $(6^n+2^n)/2$ bars in all homological dimensions.
\end{theorem}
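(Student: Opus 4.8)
The plan is to count the bars globally, without re-deriving the dimension-by-dimension formula, using only the total critical-point count from Theorem~\ref{thm:fn-morse} together with the birth/death dictionary of Lemma~\ref{lem_MorsePersistence}. First I would invoke Theorem~\ref{thm:fn-morse}: the function $f_n$ is Morse with exactly $6^n$ critical points. By Lemma~\ref{lem_MorsePersistence}, in the sublevelset persistent homology every critical point is accounted for exactly once, either as the birth of a bar (in the homological dimension equal to its index) or as the finite death of a bar (in dimension one lower). Hence the $6^n$ critical points partition into ``births'' and ``finite deaths'', giving
\[
6^n = (\#\,\text{births}) + (\#\,\text{finite deaths}).
\]

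Next I would relate both counts to the total number of bars $B$. Since every bar has exactly one birth, $\#\,\text{births} = B$. A bar contributes a finite death precisely when it is not semi-infinite, so $\#\,\text{finite deaths} = B - S$, where $S$ denotes the total number of semi-infinite bars. By the semi-infinite count in Lemma~\ref{lem_MorsePersistence}, $S$ equals the total rank of the homology of the torus, namely $\sum_{k=0}^{n}\binom{n}{k} = 2^n$. Substituting yields $6^n = B + (B - 2^n) = 2B - 2^n$, and therefore $B = (6^n + 2^n)/2$, as claimed. This argument is attractive because it needs only the coarse total $6^n$ and the total Betti sum $2^n$, bypassing the finer per-dimension bookkeeping.

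As a consistency check I would also verify the result directly against Theorem~\ref{thm:ph-k-main} by summing its per-dimension formula over $k$, using $\sum_{k}\binom{n}{k}=2^n$ and $\sum_{k}\binom{n-1}{k}=2^{n-1}$:
\[
\sum_{k=0}^{n}\left[\binom{n}{k} + (3^n-1)\binom{n-1}{k}\right] = 2^n + (3^n-1)\,2^{n-1} = 2^{n-1}(3^n+1) = \frac{6^n+2^n}{2}.
\]
The main obstacle is justifying that the critical points really do partition cleanly into births and finite deaths in a one-to-one fashion. This is exactly the content of Lemma~\ref{lem_MorsePersistence}, but because $f_n$ has many coincident critical values (several critical points sharing the same energy), some care is needed: one should either pass to an arbitrarily small Morse perturbation of $f_n$ and appeal to stability of the barcode, or invoke the underlying algebraic pairing directly, to ensure the correspondence between critical points and birth/death events is a genuine bijection despite the degeneracies in the critical values.
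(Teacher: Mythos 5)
Your proposal is correct and takes essentially the same approach as the paper: both arguments use Theorem~\ref{thm:fn-morse} to get $6^n$ critical points, use Lemma~\ref{lem_MorsePersistence} to partition them into births and finite deaths, identify the $2^n$ semi-infinite bars with the total homology rank of the torus, and your equation $6^n = 2B - 2^n$ is just an algebraic rearrangement of the paper's count $2^n + (6^n - 2^n)/2$. Your two closing remarks also mirror the paper: the consistency check via summing Theorem~\ref{thm:ph-k-main} appears as a remark in Appendix~\ref{app:num-bars}, and the coincident-critical-value issue is handled by the small perturbation discussed in the paper's proof of Lemma~\ref{lem_MorsePersistence}.
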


\begin{proof}[Proof of Theorem~\ref{thm:ph-total}]
The torus $(S^1)^n$ has homology of rank ${n \choose k}$ in dimension $k$, and therefore total homology (the sum of the ranks of the homology groups in all dimensions) of rank $\sum_{k=0}^n {n \choose k} = 2^n$.
So, there are $2^n$ semi-infinite bars that start at a critical point of $f_n$ and never die.
These $2^n$ semi-infinite bars ``use up" $2^n$ of the $6^n$ critical points, and the remaining critical points are are paired up to give $(6^n - 2^n)/2$ finite-length bars in the sublevelset persistent homology barcode.
Hence, by Lemma~\ref{lem_MorsePersistence}, the total number of bars in the persistent homology barcode of $f_n\colon (S^1)^n \to \R$ is
\begin{align*}
\Big(2^n\text{ semi-infinite bars}\Big) + \left(\frac{6^n - 2^n}{2}\text{ finite-length bars}\right)\\
= \left(\frac{6^n+2^n}{2}\text{ bars}\right). 
\end{align*}
\vspace*{-0.1in}
\end{proof}

\vspace*{-0.05in}
We are ready to prove our main theorem.

\begin{proof}[Proof of Theorem~\ref{thm:ph-k-main}]
We must show that the number of $k$-dimensional bars in the sublevelset persistent homology of 
$f_n \colon (S^1)^n\to \R$ is ${n \choose k}+(3^n-1){n-1 \choose k}$.
We proceed by induction on $k$.

Our base case is $k=0$.
Since $f_n$ has $3^n$ critical points of index 0, we know that there are $3^n=(3^n-1){n-1 \choose 0}+{n \choose 0}$ bars in its 0-dimensional sublevelset persistent homology.

For the inductive step, assume that $f_n$ has ${n \choose k}+(3^n-1){n-1 \choose k}$ bars in  its $k$-dimensional sublevelset persistent homology.
The homology of the torus is known to have $k$-dimensional homology of rank ${n \choose k}$, and therefore there are ${n \choose k}$ semi-infinite bars.
This leaves $(3^n-1){n-1 \choose k}$ finite bars in the $k$-dimensional sublevelset persistent homology of $f_n$.
By Theorem~\ref{thm:fn-morse}, $f_n$ has $3^n{n \choose k+1}$ critical points of index $k+1$.
Of these, $(3^n-1){n-1 \choose k}$ correspond to death times of the finite $k$-dimensional bars.
By Lemma~\ref{lem_MorsePersistence}, the remaining critical points of index $k+1$ that do not kill $k$-dimensional bars instead give birth to each of the $(k+1)$-dimensional bars.
So the number of $(k+1)$-dimensional persistent homology bars is
\begin{align*}
&3^n{n \choose k+1} - (3^n-1){n-1 \choose k}\\
=& {n \choose k+1}  + (3^n-1)\left({n \choose k+1}-{n-1 \choose k}\right) \\
=&{n \choose k+1}  + (3^n-1){n-1 \choose k+1},
\end{align*}
as desired.
We are done by induction.
\end{proof}

We remark that Theorem~\ref{thm:ph-k-main} implies Theorem~\ref{thm:ph-total}, since
\begin{align*}
&\sum_{k=0}^n \left({n \choose k}+(3^n-1){n-1 \choose k}\right)\\
=& \sum_{k=0}^n {n \choose k}+(3^n-1)\sum_{k=1}^{n-1}{n-1 \choose k} \\
=& 2^n + (3^n-1)2^{n-1} \\
=& 3^n2^{n-1}+2^{n-1} \\
=& (6^n+2^n)/2.
\end{align*}
To see the first equality above we used the fact that ${n-1 \choose n}=0$, and for the second equality we used that $\sum_{k=0}^n{n \choose k}=2^n$.
Nevertheless, we have proven the simpler Theorem~\ref{thm:ph-total} first, as it is a more straightforward demonstration of the tools needed to prove Theorem~\ref{thm:ph-k-main}.

\section{The Morse complex}
\label{app:morse-complex}

We give a mathematical review of the Morse complex, a fundamental object in Morse theory.
Though the Morse complex was not needed to give a count of the alkane persistent homology bars in Theorem~\ref{thm:ph-k-main}, it will be needed to give a complete description of the sublevelset persistent homology (including birth and death levels) in Appendix~\ref{app:ph-alkanes}.

See Section~8.6 of Ref.~\citenum{jost2008riemannian} and Chapter~3 of Ref.~\citenum{BanyagaHurtubise}, for example, for further details on the material on Morse complexes in this section.
We restrict attention to $\Z/2\Z$ coefficients mainly for convenience, so that we do not have to worry about orientations, but also so that our results apply to non-orientable manifolds.
Let $f$ be a Morse function on a Riemannian manifold $M$ which satisfies the
\emph{transversality condition} (also known as the Morse--Smale condition), i.e.\ the stable and unstable manifolds $W^s(p)$ and $W^u(q)$ intersect transversally for all critical points $p$ and $q$.
The stable manifold of a critical point $p$ is the set of all points $x \in M$ such that the limit $t \to +\infty$ of the solution to $\dot{x}(t) = -\nabla f$ starting at $x$ is $p$.
The unstable manifold is the same, but with the limit $t \to -\infty$.
Write $\cc{C}(f)$ for the vector space over $\Z/2\Z$ generated by the set of critical points of $f$.
We denote the index of a critical point $p$ by $\mu(p)$, and we define $\mu(p,q) := \mu(p) - \mu(q)$.
The moduli space of gradient flow paths from $p$ to $q$ will be denoted $\M{f}{p}{q}$, that is, $\M{f}{p}{q}$ is the set of paths $x \colon \R \to M$ which satisfy $\dot{x} = -\nabla f$, $\lim_{t \to \infty} x(t) = q$, and $\lim_{t \to -\infty} x(t) = p$, where we identify $x$ and $\tilde{x}$ if for some $t$ and $\tilde{t}$ we have $x(t) = \tilde{x}(\tilde{t})$.
If $\mu(p,q) = 1$, then $\M{f}{p}{q}$ is a finite set.
There is a grading on $\cc{C}(f)$ by critical point index which is nonnegative and bounded: the set of critical points of index $k$ generate $C_k(f)$.
Define a homomorphism $\bdry\colon \cc{C}(f) \to \cc{C}(f)$ by
\[ \bdry p = \sum_{q \colon \mu(p,q) = 1} (\#\M{f}{p}{q}) q, \]
where $\#\M{f}{p}{q}$ is the number of the flows from $p$ to $q$, modulo two (since we are using $\Z/2\Z$ coefficients).
One can verify that $(\cc{C}(f),\bdry)$ is a chain complex.
The homology of $\cc{C}(f)$ is {\it Morse homology}.
While {\it a priori} Morse homology depends on $f$ and on the Riemannian metric on $M$ (as the gradient depends on such), it can be shown to be isomorphic to singular homology for any such choices.

\subsection{The Morse Complex and Persistence}
\label{app:morse-complex-persistence}

The persistent homology of the sublevelsets of $f$ is closely related to the Morse complex $\cc{C}(f)$.
Let $M^a = f^{-1}(-\infty,a]$ be the sublevelset of $M$ at value $a$.
Then the second Morse lemma guarantees that $M^a$ has the homotopy type of a cell complex with a cell of dimension $k$ for each critical point of index $k$ in $M^a$.
Let $\cc{C}^a(f)$ be the sub-complex of $\cc{C}(f)$ corresponding to the sublevelset $M^a$.
The sublevelset persistent homology of $f$ is the persistent homology of $\cc{C}^a(f)$ filtered by increasing $a$.

We are now prepared to prove Lemma~\ref{lem_MorsePersistence}, which states the following.
If $f \colon M \to \R$ is a Morse function, then the birth and non-infinite death times in the sublevelset persistent homology correspond precisely to the critical points of $f$.
Each $k$-dimensional bar has birth time corresponding to a critical point of index $k$, and death time either equal to infinity or otherwise corresponding to a critical point of index $k+1$.
Furthermore, the number of semi-infinite bars in dimension $k$ is given by the $k$-dimensional homology of $M$.

\begin{proof}[Proof of Lemma~\ref{lem_MorsePersistence}]
By the first Morse lemma, if $[a,b]$ contains no critical points of $f$, then the inclusion $M^a \hookrightarrow M^b$ is a homotopy equivalence (and $\cc{C}^a(f) \iso \cc{C}^b(f)$), so persistence can only change when passing critical points.
Suppose that $a_0 < a_1 < \cdots < a_n$ is a sequence of values interleaving between the critical values of $f$, so that $M^{a_i}$ contains exactly $i$ critical points and $M^{a_n} = M$.
Let $p_i$ be the critical point with value between $a_{i-1}$ and $a_i$, and let $\cc{C}^i(f) = \cc{C}^{a_i}(f)$.
If $p_i$ is of index $k$, then $\cc{C}^i(f)$ and $\cc{C}^{i-1}(f)$ differ only in the $k$-th degree:
\[ C_k^i(f) = C_k^{i-1}(f) \oplus \spnn(p_i) .\]
Here $\spnn(p_i)$ is isomorphic to the field $\Z/2\Z$, with a single generator $p_i$.
We are interested in the corresponding change in homology.
Let $Z_k^i$ and $B_k^i$ denote the cycles and boundaries, respectively.
Note that $p_i$ cannot be in $B_k^i$ because any bounding $(k+1)$-chain would have to have been in $C_{k+1}^{i-1}(f)$.
Thus the only change in homology comes from a change in $Z_k^i$ or $B_{k-1}^{i-1}$.
There are two possibilities:
\begin{enumerate}
\item If $B_{k-1}^i \iso B_{k-1}^{i-1}$, then either $p_i \in Z_k^i$ or $p_i + \sigma \in Z_k^i$ for some $k$-chain $\sigma\in C_k^{i-1}$, and so the rank of $Z_k^i$ (and thus $H_k^i$) increases by one.
\item If not, then $B_{k-1}^i \iso B_{k-1}^{i-1} \oplus \bdry p_i$, and the $(k-1)$-dimensional homology decreases in rank by one.
\end{enumerate}

The semi-infinite bars are the homology of $\cc{C}^{a_n}(f)$, which is exactly the homology of $M$ by the isomorphism between Morse and singular homology.
The assumption that $M^{a_i}$ contains exactly $i$ critical points is true in the generic case, so any Morse function can be perturbed by a small amount to ensure it holds.
If multiple critical points appear at the same critical value, as they do for the alkane energy function, then we impose such a perturbation, possibly losing the ability to canonically identify which critical point corresponds to which birth or death.
\end{proof}

The following two corollaries are consequences of items 1 and 2, respectively, in the proof of the above lemma.

\begin{corollary}\label{cor_zeroBoundary}
If $\bdry p = 0$ for some critical point $p$ in the Morse complex of $f$, then $p$ corresponds to the birth of a bar in persistent homology.
\end{corollary}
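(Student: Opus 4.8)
The plan is to reuse the incremental bookkeeping already established in the proof of Lemma~\ref{lem_MorsePersistence}, where the critical points are inserted into the filtration one at a time and each insertion is classified as either a birth (item 1) or a death (item 2). The goal is simply to show that the hypothesis $\bdry p = 0$ forces $p$ into the birth case. First I would locate $p$ in the filtration: suppose $p = p_i$ is the critical point whose value lies between consecutive interleaving values $a_{i-1}$ and $a_i$, and let $k = \mu(p)$ be its index, so that $C_k^i(f) = C_k^{i-1}(f) \oplus \spnn(p)$ exactly as in the lemma.

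The next step is to observe that $p$ is a cycle the instant it appears. Since the boundary of the subcomplex $\cc{C}^i(f)$ is the restriction of the global Morse boundary $\bdry$, and $\bdry p = 0$ by hypothesis, we have $p \in Z_k^i$. Moreover, as already noted in the proof of Lemma~\ref{lem_MorsePersistence}, $p$ cannot lie in $B_k^i$, because any $(k+1)$-chain bounding $p$ would have to belong to $C_{k+1}^{i-1}(f)$, which is impossible since $p$ is the newest generator of its index. Hence $p$ represents a genuinely new class in $H_k^i$. To finish, I would rule out the death case directly: item 2 is characterized by the strict increase $B_{k-1}^i \iso B_{k-1}^{i-1} \oplus \bdry p_i$, which can occur only when $\bdry p_i$ contributes a boundary element not already present. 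Since $\bdry p = 0$, we have $B_{k-1}^i = B_{k-1}^{i-1}$, so this increase cannot happen and we are necessarily in item 1. By that item the rank of $H_k^i$ rises by one, which is precisely the birth of a $k$-dimensional bar at the critical value of $p$.

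I expect no serious obstacle, as the corollary is essentially a direct reading of the dichotomy in Lemma~\ref{lem_MorsePersistence}. The only point requiring a little care is confirming that $\bdry p = 0$ is incompatible with the boundary group $B_{k-1}$ growing, which is exactly the condition separating item 1 from item 2; once that incompatibility is recorded, the conclusion is immediate.
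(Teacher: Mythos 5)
Your proposal is correct and takes essentially the same route as the paper: the paper derives this corollary directly from item 1 of the dichotomy in the proof of Lemma~\ref{lem_MorsePersistence}, and your argument is precisely a careful unpacking of that derivation (observing that $\bdry p = 0$ forces $B_{k-1}^i = B_{k-1}^{i-1}$, ruling out the death case, and placing $p$ itself in $Z_k^i \setminus B_k^i$ so that $H_k$ gains a class). Nothing is missing.
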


\begin{corollary}
If the bar corresponding to critical point $q$ dies at the appearance of critical point $p$, then $q \in \bdry p$.
\end{corollary}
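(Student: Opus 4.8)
The plan is to read the statement off directly from item~2 in the proof of Lemma~\ref{lem_MorsePersistence}, which already isolates the chain that becomes a boundary at a death event. Suppose the bar corresponding to the critical point $q$, say of index $k-1$, dies exactly when the critical point $p$ is added to the sublevelset filtration; by Lemma~\ref{lem_MorsePersistence} the point $p$ then has index $k$. In the notation of that proof write $p = p_i$, so that the death falls into the case~2 alternative, where $B_{k-1}^i \iso B_{k-1}^{i-1} \oplus \bdry p_i$ and the $(k-1)$-dimensional homology drops in rank by one.

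First I would record that the dying class is represented by the concrete cycle $\bdry p$: it is a $(k-1)$-cycle that is a boundary in $\cc{C}^i(f)$ but not in $\cc{C}^{i-1}(f)$, so $[\bdry p]$ is exactly the class whose death is caused by the appearance of $p$, and by hypothesis this is the class born at $q$. Next I would locate $q$ inside $\bdry p$. Writing $\bdry p$ as a $\Z/2$-sum of index-$(k-1)$ critical points, I would observe that, because every critical point enters the filtration at its own value, the chain $\bdry p$ only becomes a valid cycle once the last of its constituent critical points has appeared; call that one $q'$. Splitting $\bdry p = q' + (\bdry p - q')$ with $\bdry p - q'$ supported on strictly earlier critical points puts the birth into the form of item~1 (cf.\ Corollary~\ref{cor_zeroBoundary}), exhibiting $q'$ as the generator whose appearance gives birth to $[\bdry p]$. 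Since the dying bar is born at $q$, I would conclude $q = q'$, and in particular $q$ is a summand of $\bdry p$, i.e.\ $q \in \bdry p$.

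The hard part will be the identification of an abstract persistence bar with a single critical point through the representative $\bdry p$. A persistence bar is only an interval, and the proof of Lemma~\ref{lem_MorsePersistence} already warns that the assignment of births and deaths to individual critical points need not be canonical once several critical points share a critical value, as they do for $f_n$. The care required is therefore to verify that tracking the specific representative supplied by item~2, and declaring its birth to be at the last-appearing critical point of its support, is consistent with the birth/death bookkeeping of Lemma~\ref{lem_MorsePersistence} under the generic perturbation used there; granting this compatibility, the containment $q \in \bdry p$ follows as above.
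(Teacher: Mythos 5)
Your overall route is the paper's route---the paper, too, derives this corollary by a bare appeal to item~2 in the proof of Lemma~\ref{lem_MorsePersistence}---but the step you yourself flag as ``the hard part'' is not a deferrable technicality: it is false, and it is exactly where the argument breaks. You claim that the bar killed by $p$ is the one born at the last-appearing critical point $q'$ in the support of $\bdry p$, and conclude $q=q'\in\bdry p$. The problem is that by the time $p$ enters the filtration, earlier death events may have rewritten the class $[\bdry p]$ as a sum of living classes born at critical points \emph{outside} the support of $\bdry p$; the elder rule (Lemma~\ref{lem_elderRule}) then kills the youngest of those, which need not lie in $\bdry p$ at all. Concretely, take a Morse function on $S^1$ (the butane domain) with minima $m_1,m_2,m_3$ at heights $1,2,3$ and maxima $s_1,s_2,s_3$ at heights $4,5,6$, in the circular order $m_1,s_2,m_3,s_1,m_2,s_3$, so that $\bdry s_1=m_2+m_3$, $\bdry s_2=m_1+m_3$, and $\bdry s_3=m_1+m_2$. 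At $s_1$ the bar born at $m_3$ dies, consistent with the corollary. At $s_2$, your rule names $q'=m_3$, the last-appearing point of the support of $\bdry s_2$---but that bar is already dead. Since $[m_3]=[m_2]$ at this stage, the class killed is $[m_1]+[m_2]$, and the elder rule kills the bar born at $m_2$ (the matrix-reduction pairing is indeed $(m_2,s_2)$). Yet $m_2\notin\bdry s_2$.

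This example shows two things. First, your identification $q=q'$ cannot be rescued by the generic perturbation you invoke: the example has distinct critical values and no ties. Second, the corollary itself, read literally with bars labelled by their birth critical points as in Lemma~\ref{lem_MorsePersistence}, fails in this example; what is true is that $q$ lies in the \emph{reduced} column of $p$, i.e.\ $q\in\bdry p+\spnn\{\bdry p'\,:\,\mu(p')=\mu(p),\ f(p')<f(p)\}$ (here $m_2\in\bdry s_2+\bdry s_1$). A correct write-up must therefore either prove this weaker membership, or impose a hypothesis that rules out the cascade, e.g.\ that every critical point in the support of $\bdry p$ still labels a living bar when $p$ appears (as happens in the paper's butane and pentane computations). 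To be fair, the paper's own proof is a one-sentence appeal to item~2 and glosses the same point---it concedes, at the end of the proof of Lemma~\ref{lem_MorsePersistence} and in the remark after the pentane table, that the assignment of critical points to births and deaths is not canonical---but your proposal, by committing to the specific mechanism ``birth at the last-appearing support element,'' turns that gloss into a step that is provably wrong.
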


\section{A K\"{u}nneth Formula for the Alkane Persistent Homology}
\label{app:ph-alkanes}

We use the Morse complex to describe the sublevelset persistent homology for butane and pentane.
We then apply a version of the K\"{u}nneth formula for persistent homology\cite{GakharPerea_KunnethFormula} to give a complete description of the sublevelset persistent homology of the alkane molecules.

In particular, we show that all non-infinite bars in the persistent barcode have exactly the same length.
This is visually evident by the persistence barcodes in Figure~\ref{fig:alkane-barcodes}, or equivalently by the persistence diagrams in Figure~\ref{fig:alkane-diagrams}.
A persistence diagram displays the same information as a persistence barcode, just in a different format --- each interval in the barcode is plotted as a point in the plane, with its horizontal and vertical coordinates the birth and death value, resp., of the interval.

\begin{figure*}
\centering
\includegraphics[width=1\textwidth]{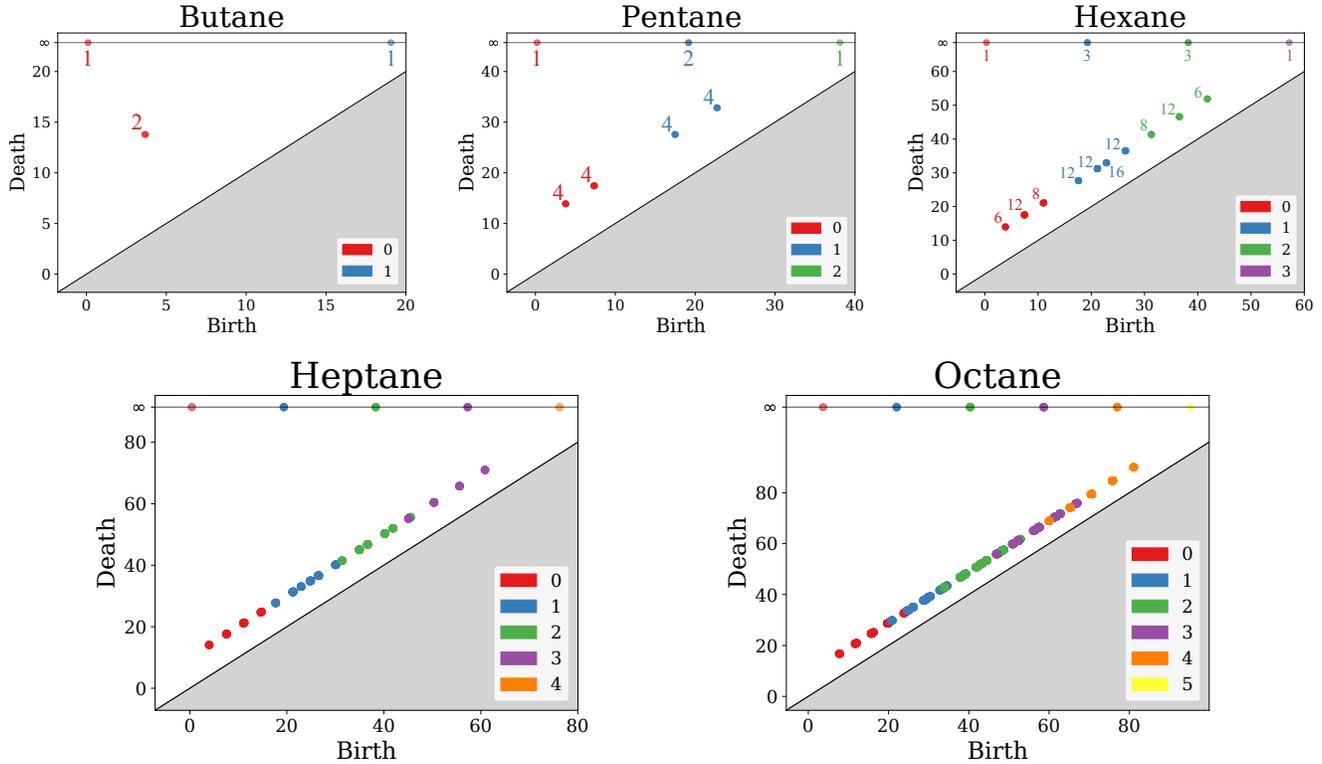}
\caption{
Persistent homology diagrams for the alkane systems, from butane ($n=1$) through octane ($n=5$).
The $x$- and $y$-axes are both energy (kJ/mol).
The color of a persistence diagram point indicates its homological dimension, and the integer label indicates its multiplicity.
The multiplicities for heptane and octane are omitted due to space constraints, but they are given by Theorem~\ref{thm:alkane-ph-characterization}.
}
\label{fig:alkane-diagrams}
\end{figure*}

\subsection{The Persistence of the Butane Energy Function}

The butane energy function $f_1=V\colon S^1\to\R$ has six critical points: a global minimum, $a$, two local minima, $b_1$ and $b_2$, two local maxima, $c_1$ and $c_2$, and a global maximum $d$, with corresponding critical values $\alpha < \beta < \gamma < \delta$; see Figure~\ref{fig:butne-labeled}.

\begin{figure}[ht]
\includegraphics[width=0.4\textwidth]{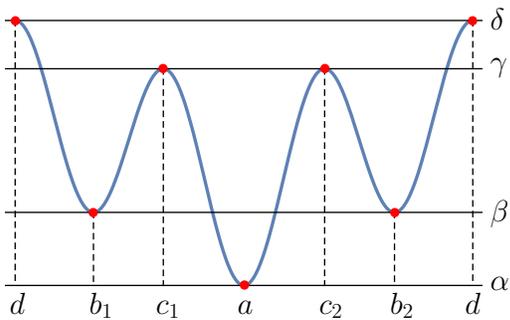}
\caption{PEL of butane.
The $y$-axis is energy, and the $x$-axis is the dihedral angle varying from $0$ to $2\pi$.
From left to right, the critical points are $d, b_1, c_1, a, c_2, b_2$.}
\label{fig:butne-labeled}
\end{figure}

The butane Morse complex is
$ 0 \to (\Z/2\Z)^3 \to (\Z/2\Z)^3 \to 0 .$
We determine the boundary map, $\bdry$, by determining the stable and unstable manifolds of each critical point; see Table~\ref{table:stable-unstable}.
The given intervals are in the circle, $S^1=\R/2\pi\Z$.
The boundary of every index $0$ critical point is $0$.
The boundaries of the index $1$ critical points are
$\bdry c_1 = a + b_1$,
$\bdry c_2 = a + b_2$, and
$\bdry d = b_1 + b_2$.
If necessary for the purpose of breaking ties, assume a small perturbation in the order suggested by the notation: $b_1 \prec b_2$ and $c_1 \prec c_2$.

\begin{table}
\begin{center}
\begin{tabular}{|c||c|c|c|} \hline
$p = $ & index & $W^s(p)$ & $W^u(p)$ \\ \hline\hline
$a$ & 0 & $(c_1,c_2)$ & $\{a\}$ \\ \hline
$b_1$ & 0 & $(d,c_1)$ & $\{b_1\}$ \\ \hline
$b_2$ & 0 & $(c_2,d)$ & $\{b_2\}$ \\ \hline
$c_1$ & 1 & $\{c_1\}$ & $(b_1,a)$ \\ \hline
$c_2$ & 1 & $\{c_2\}$ & $(a,b_2)$ \\ \hline
$d$ & 1 & $\{d\}$ & $(b_2,b_1)$ \\ \hline
\end{tabular}
\caption{The stable and unstable manifolds of the critical points of $f_1=V\colon S^1\to\R$.}
\label{table:stable-unstable}
\end{center}
\end{table}

We will use the elder rule when determining the sublevelset persistence of butane.

\begin{lemma}[Elder Rule\cite{EdelsbrunnerHarer}]\label{lem_elderRule}
If $u$ and $v$ represent distinct $k$-dimensional homology classes in $H_k^i$ at scale $i$, and if $u+v$ is added to the boundaries $B_k^j$ at some scale $j > i$, then the persistent homology bar corresponding to the younger of $u$ and $v$ is killed.
\end{lemma}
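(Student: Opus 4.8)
The plan is to track the single merge event caused by adding the cell at scale $j$, and to pin down \emph{which} of the two classes dies by comparing birth times through the ranks of the inclusion-induced maps on homology. Throughout I work with $\Z/2\Z$ coefficients and write $\iota_{s,t}\colon H_k^s\to H_k^t$ for the map induced by the inclusion $\cc{C}^s(f)\hookrightarrow \cc{C}^t(f)$ when $s\le t$. First I would observe that the hypothesis is exactly the second case in the proof of Lemma~\ref{lem_MorsePersistence}: the cell introduced at scale $j$ has index $k+1$ and its boundary places $u+v$ into $B_k^j$, so the rank of $H_k$ drops by one at scale $j$ and exactly one $k$-dimensional bar dies there. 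The whole content of the Elder Rule is to name that bar.

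Next I would set up the merge algebraically. Since $[u]\ne[v]$ in $H_k^i$ and $u+v\notin B_k^{j-1}$, the images $\xi_u:=\iota_{i,j-1}[u]$ and $\xi_v:=\iota_{i,j-1}[v]$ remain distinct in $H_k^{j-1}$, whereas in $H_k^j$ we have $[u+v]=0$, that is $[u]=[v]$ (using $-1=1$ in $\Z/2\Z$). Because the rank drops by exactly one, $\ker\iota_{j-1,j}=\spnn(\xi_u+\xi_v)$. I would then define the birth of a class $\xi\in H_k^{j-1}$ by $\mathrm{birth}(\xi)=\min\{t:\xi\in\mathrm{im}\,\iota_{t,j-1}\}$, and set $b_u=\mathrm{birth}(\xi_u)$, $b_v=\mathrm{birth}(\xi_v)$. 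After the generic perturbation already assumed in Lemma~\ref{lem_MorsePersistence}, at most one critical point is introduced at each scale, so no two bars are born simultaneously and $b_u\ne b_v$; assume without loss of generality $b_u<b_v$, making $v$ the younger class.

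The key step is to compute the birth of the \emph{dying} class $\xi_u+\xi_v$, and I would show $\mathrm{birth}(\xi_u+\xi_v)=b_v$. Passing to the quotient $H_k^{j-1}/\mathrm{im}\,\iota_{b_v-1,j-1}$, the elder class $\xi_u$ (born at $b_u\le b_v-1$) becomes $0$ while $\xi_v$ does not, so $\xi_u+\xi_v\ne 0$ there and hence $\xi_u+\xi_v\notin\mathrm{im}\,\iota_{b_v-1,j-1}$; on the other hand both summands lie in $\mathrm{im}\,\iota_{b_v,j-1}$. With this in hand, $\mathrm{im}\,\iota_{t,j}=\iota_{j-1,j}(\mathrm{im}\,\iota_{t,j-1})$ and the one-dimensionality of $\ker\iota_{j-1,j}$ give, for every $t$, the rank identity $\mathrm{rank}\,\iota_{t,j}=\mathrm{rank}\,\iota_{t,j-1}-\mathbf{1}_{\{\xi_u+\xi_v\,\in\,\mathrm{im}\,\iota_{t,j-1}\}}=\mathrm{rank}\,\iota_{t,j-1}-\mathbf{1}_{\{t\ge b_v\}}$. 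Feeding this into the standard multiplicity count for bars born at $t_0$ and dying at $j$, namely $(\mathrm{rank}\,\iota_{t_0,j-1}-\mathrm{rank}\,\iota_{t_0,j})-(\mathrm{rank}\,\iota_{t_0-1,j-1}-\mathrm{rank}\,\iota_{t_0-1,j})$, the step function $\mathbf{1}_{\{t\ge b_v\}}$ contributes $1$ only at $t_0=b_v$. Hence the unique bar dying at scale $j$ is the one born at $b_v$, i.e.\ the bar of the younger class $v$, which is the assertion.

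I expect the main obstacle to be precisely the identification $\mathrm{birth}(\xi_u+\xi_v)=b_v$, together with making rigorous that ``the bar corresponding to $u$'' is the bar whose birth time is $b_u$: this is where the asymmetry between elder and younger is actually forced, and it is what upgrades the bare rank-drop count (which only says that \emph{some} bar dies) into the Elder Rule. A secondary point to treat carefully is the degenerate case in which several alkane critical points share a critical value: as in the proof of Lemma~\ref{lem_MorsePersistence}, one first imposes a small perturbation (here the order $b_1\prec b_2$ and $c_1\prec c_2$ on the butane critical points) so that the one-cell-per-scale hypothesis, and therefore the distinctness $b_u\ne b_v$, genuinely holds.
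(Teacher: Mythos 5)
You should know at the outset that the paper never proves this lemma: it is stated as the classical Elder Rule, attributed to Ref.~\citenum{EdelsbrunnerHarer}, and then used as a black box in the butane computation, so there is no in-paper argument to compare yours against. Your proposal is therefore a genuine addition, and its core is correct. Interpreting age as the birth time $\mathrm{birth}(\xi)=\min\{t:\xi\in\mathrm{im}\,\iota_{t,j-1}\}$, you correctly reduce the hypothesis to the second case in the proof of Lemma~\ref{lem_MorsePersistence}, identify $\ker\iota_{j-1,j}=\spnn(\xi_u+\xi_v)$, prove the key identity $\mathrm{birth}(\xi_u+\xi_v)=b_v$, and convert this into a statement about bars through the rank/multiplicity formula; that last step is exactly what upgrades the bare rank-drop count to the assertion that the unique bar dying at scale $j$ is the one born at $b_v$, and it also makes precise the phrase ``the bar corresponding to the younger of $u$ and $v$,'' which the paper leaves informal.

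One claim in your write-up is false, however, and you should repair rather than rely on it. Genericity --- one critical point per scale --- ensures that no two \emph{bars} are born at the same scale, but it does not ensure $b_u\ne b_v$: births of \emph{classes} can coincide, because two distinct classes may differ by an elder class. Concretely, take components $A,B,C$ born at scales $1,2,3$, with an edge joining $A$ to $B$ at scale $4$ (one cell per scale throughout). The classes $u=[A]+[C]$ and $v=[B]+[C]$ are distinct, both are born at scale $3$, and $u+v=[A]+[B]$ enters the boundaries at scale $4$; yet the bar that dies at scale $4$ is the one born at scale $2$, which corresponds to neither $u$ nor $v$ --- indeed it is elder than both. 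So when $b_u=b_v$ the conclusion is not merely unproved by your argument; there is no ``younger'' class at all, and the dying bar need not be tied to either input class. The honest fix is to note that the lemma implicitly presupposes $b_u\ne b_v$ (as it does in the paper's only application, where $u$ is born at $\alpha$ and $v$ at $\beta$), rather than to derive that inequality from the one-cell-per-scale hypothesis; with that reading, your proof is complete.
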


We can now state exactly the persistence of the butane energy landscape $f_1=V$.
One zero-dimensional bar is born at $\alpha$, and two are born at $\beta$, by Corollary~\ref{cor_zeroBoundary} above.
At energy level $\gamma$, the appearance of $c_1$ and $c_2$ kills two of the existing bars.
Since the boundaries are both of the form $a + b_i$, the elder rule says that the bars generated by $b_1$ and $b_2$ are those that die.
Lastly a one-dimensional bar appears at $\delta$ since $\bdry d$ is already an element of $B_0$, the 0-dimensional boundary group.
The bars generated by $a$ and $d$ are semi-infinite, matching the homology of $S^1$.
The two finite bars have length $L := \gamma - \beta$.

\subsection{Persistence of the Pentane Energy Function}

\begin{table}
\begin{center}
\begin{tabular}{|c||c|c|c|c|}
\hline
$p = $    & index & $\bdry p$ & $f_2(p)$ & effect     \\ \hline\hline
$(a,a)$     & $0$ & $0$       & $2\alpha$ & birth         \\ \hline
$(a,b_1)$   & $0$ & $0$       & $\alpha+\beta$ & birth    \\ \hline
$(a,b_2)$   & $0$ & $0$       & $\alpha+\beta$ & birth    \\ \hline
$(b_1,a)$   & $0$ & $0$       & $\alpha+\beta$ & birth    \\ \hline 
$(b_2,a)$   & $0$ & $0$       & $\alpha+\beta$ & birth    \\ \hline
$(b_1,b_1)$ & $0$ & $0$       & $2\beta$       & birth    \\ \hline
$(b_1,b_2)$ & $0$ & $0$       & $2\beta$       & birth    \\ \hline
$(b_2,b_1)$ & $0$ & $0$       & $2\beta$       & birth    \\ \hline
$(b_2,b_2)$ & $0$ & $0$       & $2\beta$       & birth    \\ \hline
$(a,c_1)$   & $1$ & $(a,a) + (a,b_1)$      & $\alpha+\gamma$ & death $\alpha+\beta$ bar
\\ \hline
$(a,c_2)$   & $1$ & $(a,b_2) + (a,a)$      & $\alpha+\gamma$ & death $\alpha+\beta$ bar
\\ \hline
$(c_1,a)$   & $1$ & $(a,a) + (b_1,a)$       & $\alpha+\gamma$ & death $\alpha+\beta$ bar
\\ \hline
$(c_2,a)$   & $1$ & $(b_2,a) + (a,a)$       & $\alpha+\gamma$ & death $\alpha+\beta$ bar
\\ \hline
$(b_1,c_1)$ & $1$ & $(b_1,a) + (b_1,b_1)$  & $\beta+\gamma$  & birth  \\ \hline 
$(b_1,c_2)$ & $1$ & $(b_1,b_2) + (b_1,a)$  & $\beta+\gamma$  & birth  \\ \hline 
$(b_2,c_1)$ & $1$ & $(b_2,a) + (b_2,b_1)$  & $\beta+\gamma$  & birth  \\ \hline 
$(b_2,c_2)$ & $1$ & $(b_2,b_2) + (b_2,a)$  & $\beta+\gamma$  & birth  \\ \hline
$(c_1,b_1)$ & $1$ & $(a,b_1) + (b_1,b_1)$   & $\beta+\gamma$  & death $2\beta$ bar
\\ \hline
$(c_1,b_2)$ & $1$ & $(a,b_2) + (b_1,b_2)$   & $\beta+\gamma$  & death $2\beta$ bar
\\ \hline
$(c_2,b_1)$ & $1$ & $(b_2,b_1) + (a,b_1)$   & $\beta+\gamma$  & death $2\beta$ bar
\\ \hline
$(c_2,b_2)$ & $1$ & $(b_2,b_2) + (a,b_2)$   & $\beta+\gamma$  & death $2\beta$ bar
\\ \hline
$(a,d)$     & $1$ & $(a,b_1) + (a,b_2)$    & $\alpha+\delta$ & birth \\ \hline
$(d,a)$     & $1$ & $(b_1,a) + (b_2,a)$     & $\alpha+\delta$ & birth \\ \hline
$(b_1,d)$   & $1$ & $(b_1,b_1) + (b_1,b_2)$& $\beta+\delta$  & birth \\ \hline
$(b_2,d)$   & $1$ & $(b_2,b_1) + (b_2,b_2)$& $\beta+\delta$  & birth \\ \hline
$(d,b_1)$   & $1$ & $(b_1,b_1) + (b_2,b_1)$ & $\beta+\delta$  & birth \\ \hline
$(d,b_2)$   & $1$ & $(b_1,b_2) + (b_2,b_2)$ & $\beta+\delta$  & birth \\ \hline
$(c_1,c_1)$ & $2$ & \begin{tabular}{@{}c@{}}$(a,c_1) + (b_1,c_1)$ \\ $+ (c_1,a) + (c_1,b_1)$\end{tabular} & $2\gamma$ & death $\beta+\gamma$ bar \\ \hline
$(c_1,c_2)$ & $2$ & \begin{tabular}{@{}c@{}}$(a,c_2) + (b_1,c_2)$ \\ $+ (c_1,b_2) + (c_1,a)$\end{tabular} & $2\gamma$ & death $\beta+\gamma$ bar \\ \hline
$(c_2,c_1)$ & $2$ & \begin{tabular}{@{}c@{}}$(b_2,c_1) + (a,c_1)$ \\ $+ (c_2,a) + (c_2,b_1)$\end{tabular} & $2\gamma$ & death $\beta+\gamma$ bar \\ \hline
$(c_2,c_2)$ & $2$ & \begin{tabular}{@{}c@{}}$(b_2,c_2) + (a,c_2)$ \\ $+ (c_2,b_2) + (c_2,a)$\end{tabular} & $2\gamma$ & death $\beta+\gamma$ bar \\ \hline
$(c_1,d)$ & $2$ & \begin{tabular}{@{}c@{}}$(a,d) + (b_1,d)$ \\ $+ (c_1,b_1) + (c_1,b_2)$\end{tabular} & $\gamma+\delta$ & death $\beta+\delta$ bar \\ \hline
$(c_2,d)$ & $2$ & \begin{tabular}{@{}c@{}}$b_2,d) + (a,d)$ \\ $+ (c_2,b_1) + (c_2,b_2)$\end{tabular} & $\gamma+\delta$ & death $\beta+\delta$ bar \\ \hline
$(d,c_1)$ & $2$ & \begin{tabular}{@{}c@{}}$(b_1,c_1) + (b_2,c_1)$ \\ $+ (d,a) + (d,b_1)$\end{tabular} & $\gamma+\delta$ & death $\beta+\delta$ bar \\ \hline
$(d,c_2)$ & $2$ & \begin{tabular}{@{}c@{}}$(b_1,c_2) + (b_2,c_2)$ \\ $+ (d,b_2) + (d,a)$\end{tabular} & $\gamma+\delta$ & death $\beta+\delta$ bar \\ \hline
$(d,d)$ & $2$ & \begin{tabular}{@{}c@{}}$(b_1,d) + (b_2,d)$ \\ $+ (d,b_1) + (d,b_2)$\end{tabular} & $2\delta$ & birth  \\ \hline
\end{tabular}
\end{center}
\caption{Pentane Morse complex computation.
By ``death $r$ bar" in the effect column, we mean that the critical point kills a persistent homology bar that was born at energy $r$.}
\label{table:pentane}
\end{table}

We next consider the pentane energy landscape $f_2\colon (S^1)^2\to\R$.
We know the cells of the Morse complex by our work in Section~\ref{sec:proofs}.
We can now write down the boundary maps and compute the persistence by brute force using Lemma~\ref{lem_MorsePersistence}.
The birth times occur in the order
\begin{align*}
&2\alpha < \alpha+\beta < 2\beta < \alpha+\gamma < \beta+\gamma \\
<& \alpha+\delta < \beta+\delta < 2\gamma < \gamma+\delta < 2\delta.
\end{align*}
Table~\ref{table:pentane} summarizes the persistent homology computation.

We remark that the index $1$ critical points appearing at $\beta+\gamma$ are paired so that half of them are births and half are deaths, but there is no canonical choice of which critical point in each pair causes each effect.

\subsection{General Form}

The direct computations used above for pentane become too complicated to perform by hand for larger alkanes with higher-dimensional PELs.
To proceed to the general case we need several lemmas.
The alkane energy function for $n$ dimensions is $f_n(\phi_1,\ldots,\phi_n) = V(\phi_1) + \ldots V(\phi_n)$.
In Theorem~\ref{thm:fn-morse} we established that the critical points of $f_n$ must be critical points of $V$ in each component, and that the index is the sum of the indices of the components.

We state the following lemmas for the slightly more general setting where manifold $M$ is a product $M = M_1 \times \cdots \times M_n$, Morse function $G \colon M \to \R$ is $G(x_1,\ldots,x_n) = g_1(x_1) + \cdots + g_n(x_n)$, and the functions $g_i \colon M_i \to \R$ are possibly different with possibly different domains.

\begin{lemma}\label{lem_productManifolds}
When product manifold $M$ and Morse function $G(x_1,\ldots,x_n) = g_1(x_1) + \cdots + g_n(x_n)$ are as described above,
\begin{enumerate}
\item A point $p \in M$ is a critical point of $G$ if and only if $p = (p_1, \ldots , p_n)$ with $p_i$ a critical point of $g_i$ for all $i$,
\item The stable manifold of $p$ is $W^s(p) = W^s(p_1) \times \cdots \times W^s(p_n)$ and the unstable manifold is $W^u(p) = W^u(p_1) \times \cdots \times W^u(p_n)$,
\item If $g_1, \ldots , g_n$ each satisfy the transversality condition, so does $G$,
\item The moduli space of gradient flows is $\M{G}{p}{q} = \M{g_1}{p_1}{q_1} \times \cdots \times \M{g_n}{p_n}{q_n}$, 
\item The index of $p$ is $\sum_i \mu(p_i)$ and the relative index of critical points $p$ and $q$ is given by $\mu(p,q) = \sum_{i=1}^n \mu(p_i,q_i)$, and
\item If $\mu(p,q) = 1$ and $\M{G}{p}{q} \neq \emptyset$, then for some $1 \leq i \leq n$ we have $p_i \neq q_i$, $q_j = p_j$ for all $j \neq i$, and
$\#\M{G}{p}{q} = \#\M{g_i}{p_i}{q_i}$.
\end{enumerate}
\end{lemma}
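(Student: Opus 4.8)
The plan is to derive all six assertions from a single structural fact: equipping $M = M_1 \times \cdots \times M_n$ with the \emph{product Riemannian metric}, the gradient of $G$ splits as $\nabla G = (\nabla g_1, \ldots, \nabla g_n)$, so the negative gradient flow of $G$ is the product of the negative gradient flows of the $g_i$. I would record this at the outset, since items (1), (2), (4), and (6) are statements about this flow, while (3) and (5) follow from the accompanying splittings of the tangent spaces and the Hessian. (Note that the transversality claim in (3) is inherently a statement about this metric, since gradient flows depend on it.)

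First I would treat the pointwise items. For (1), $\nabla G(p) = 0$ if and only if every component $\nabla g_i(p_i)$ vanishes, giving the product characterization of critical points. For (5), because $G$ separates across factors, the mixed second partials across distinct factors vanish, so in product coordinates the Hessian is block diagonal, $\mathrm{Hess}_G(p) = \bigoplus_i \mathrm{Hess}_{g_i}(p_i)$; its negative eigenvalues are the union over $i$ of those of the blocks, whence $\mu(p) = \sum_i \mu(p_i)$ and $\mu(p,q) = \sum_i \mu(p_i,q_i)$. The same block form shows $\mathrm{Hess}_G(p)$ is nonsingular exactly when each block is, consistent with the Morse hypothesis. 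For (2), the product flow $\varphi_t(x) = (\varphi^1_t(x_1), \ldots, \varphi^n_t(x_n))$ converges to $p$ as $t \to +\infty$ if and only if each coordinate converges to $p_i$, so $W^s(p) = \prod_i W^s(p_i)$, and dually $W^u(p) = \prod_i W^u(p_i)$ as $t \to -\infty$.

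Item (3) then follows from (2) by a direct-sum computation: at any $x \in W^u(p) \cap W^s(q)$ we have $x_i \in W^u(p_i) \cap W^s(q_i)$ for each $i$, the product structure gives $T_x M = \bigoplus_i T_{x_i} M_i$ with $T_x W^u(p) = \bigoplus_i T_{x_i} W^u(p_i)$ and $T_x W^s(q) = \bigoplus_i T_{x_i} W^s(q_i)$, and summing the factorwise transversality equalities $T_{x_i}W^u(p_i) + T_{x_i}W^s(q_i) = T_{x_i}M_i$ yields $T_x W^u(p) + T_x W^s(q) = T_x M$.

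The delicate item is (4), and it is the step I expect to require the most care. At the level of parametrized trajectories the product flow gives a clean bijection: a parametrized $G$-trajectory is exactly a tuple of parametrized $g_i$-trajectories. The subtlety is the time-reparametrization identification built into $\M{\cdot}{\cdot}{\cdot}$: the moduli space $\M{G}{p}{q}$ quotients by a single, diagonal time translation, whereas $\prod_i \M{g_i}{p_i}{q_i}$ quotients each factor by its own translation. I would therefore prove (4) at the level of these reparametrization classes, making explicit that the identification on the product side is the diagonal $\R$-action, and then deduce (6) — the only form actually used — where the discrepancy disappears. For (6), the product structure first forces each factor to admit a flow from $p_i$ to $q_i$, so each $\mu(p_i,q_i) \geq 0$; additivity from (5) then gives $\sum_i \mu(p_i,q_i) = 1$. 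Under transversality the space of nonconstant flows in factor $i$ has dimension $\mu(p_i,q_i) - 1$, so a factor with $p_i \neq q_i$ contributes $\mu(p_i,q_i) \geq 1$ while a factor with $\mu(p_i,q_i) = 0$ admits only the constant flow, forcing $p_i = q_i$. Hence exactly one index $i$ has $p_i \neq q_i$, with $\mu(p_i,q_i) = 1$, and $p_j = q_j$ for $j \neq i$. Since the remaining coordinates are then constant and unaffected by time translation, projecting a flow onto its $i$-th coordinate is a bijection $\M{G}{p}{q} \to \M{g_i}{p_i}{q_i}$, whence $\#\M{G}{p}{q} = \#\M{g_i}{p_i}{q_i}$.
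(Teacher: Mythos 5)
Your proposal is correct and takes essentially the same route as the paper's proof: with the product metric the gradient, Hessian, and negative-gradient flow of $G$ all split across the factors, items (1), (2), (5) follow componentwise, and item (6) follows by the same index count (each factor with $p_i \neq q_i$ must contribute at least $1$ to $\mu(p,q)$, so exactly one coordinate differs). The only substantive difference is that you are more careful in two places where the paper is terse: its proof of (4) says only that a flow for $G$ is a tuple of flows for the $g_i$, ignoring the mismatch you flag between the diagonal time-translation quotient defining $\M{G}{p}{q}$ and the factorwise quotients defining $\M{g_1}{p_1}{q_1} \times \cdots \times \M{g_n}{p_n}{q_n}$ (your resolution --- state (4) at the level of reparametrization classes and use it only through (6), where all but one coordinate is constant and the discrepancy vanishes --- is the right one); and its proof of (3) writes the tangent-space splitting only for $W^u(p)$ and $W^s(p)$ at a single critical point, whereas your computation for arbitrary pairs $W^u(p) \cap W^s(q)$ is what the Morse--Smale condition actually demands.
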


\begin{proof}
\begin{enumerate}
\item Follows by the linearity of the gradient.
\item If $\lim_{t \to \infty} x(t) = p$ where $\dot{x}(t) = -\nabla G(x(t))$, then $\lim_{t \to \infty} x_i(t) = p_i$ for each $i$, and likewise for the limit $t \to -\infty$.
\item To see that $G$ satisfies the transversality condition, recall that $T_xM = T_{x_1}M_1 \times \cdots \times T_{x_n}M_n$, and that by transversality in each component, $T_{x_i}M_i = T_{x_i}W^u(p_i) \oplus T_{x_i}W^s(p_i)$.
Thus $T_xM \iso T_xW^u(p) \oplus T_xW^s(p)$ with the isomorphism given by permuting the order of the coordinates.
\item Follows because a flow $x(t)$ from $p$ to $q$ must consist of a flow $x_i(t)$ from $p_i$ to $q_i$ in each component.
\item Since $\nabla G$ is the sum of $\nabla g_i$, the Hessian of $G$ breaks down into a block-diagonal matrix (in the natural product coordinates) with blocks given by the Hessians of each individual $g_i$.
Therefore the number of negative (respectively, positive) eigenvalues of the Hessian at $p$ is the sum of the number of negative (positive) eigenvalues of each $g_i$, and so $\mu(p) = \sum_i \mu(p_i)$ and $\mu(p,q) = \sum_i \mu(p_i,q_i)$.
\item Let $\mu(p,q) = 1$ and $\M{G}{p}{q} \neq \emptyset$.
Then it must be the case that $p$ and $q$ differ in exactly one coordinate, since otherwise there would be some index $j$ with $\mu(q_j)>\mu(p_j)$, giving $\M{g_j}{p_j}{q_j} = \emptyset$ and $\M{G}{p}{q} = \emptyset$ by 3, a contradiction.
Hence $p$ and $q$ differ in exactly one coordinate $i$, and in this coordinate $\mu(p_i,q_i)=1$, by 5.
So $\M{G}{p}{q} \cong \M{g_i}{p_i}{q_i}$ and $\#\M{G}{p}{q} = \#\M{g_i}{p_i}{q_i}$.
\end{enumerate}
\end{proof}

\begin{corollary}\label{cor_boundary}
In $C_\bullet(G)$ the differential is given by 
\[ \bdry p = \sum_{i=1}^n (p_1, \ldots , \bdry p_i , \ldots , p_n).\]
\end{corollary}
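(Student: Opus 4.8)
The plan is to unwind the definition of the Morse boundary operator on $\cc{C}(G)$ and apply part~6 of Lemma~\ref{lem_productManifolds} term by term. Recall that the differential is given by $\bdry p = \sum_{q \,\colon\, \mu(p,q)=1} (\#\M{G}{p}{q})\, q$, where the sum ranges over all critical points $q$ of $G$ of relative index one, and coefficients are taken in $\Z/2\Z$. Thus the entire content of the corollary is to determine which such $q$ actually contribute, and with what multiplicity, so the argument is fundamentally a reorganization of this sum.

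First I would restrict the index set. By part~6 of the lemma, whenever $\mu(p,q)=1$ and $\M{G}{p}{q}\neq\emptyset$, the critical points $p=(p_1,\dots,p_n)$ and $q$ must agree in all but one coordinate. Equivalently, any $q$ with $\mu(p,q)=1$ that differs from $p$ in two or more coordinates has empty moduli space and hence contributes a zero coefficient. So the boundary sum collapses to a sum over those $q$ obtained from $p$ by replacing a single coordinate $p_i$ with some critical point $q_i$ of $g_i$ satisfying $\mu(p_i,q_i)=1$.

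Next I would split this surviving sum according to which coordinate $i$ is altered, and invoke the multiplicity identity $\#\M{G}{p}{q}=\#\M{g_i}{p_i}{q_i}$ from part~6. For a fixed $i$, the inner sum is
\[ \sum_{q_i \,\colon\, \mu(p_i,q_i)=1} (\#\M{g_i}{p_i}{q_i})\,(p_1,\dots,q_i,\dots,p_n), \]
where the tuple notation is understood to be extended $\Z/2\Z$-linearly in the $i$-th slot. By the definition of the component differential on $\cc{C}(g_i)$, the coefficient-weighted sum over $q_i$ is precisely $\bdry p_i$, so the inner sum equals $(p_1,\dots,\bdry p_i,\dots,p_n)$. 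Summing over $i$ then yields the claimed formula $\bdry p = \sum_{i=1}^n (p_1,\dots,\bdry p_i,\dots,p_n)$.

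I do not anticipate a genuine obstacle: the corollary is essentially bookkeeping layered on top of Lemma~\ref{lem_productManifolds}, whose part~6 already carries all the geometric weight---namely that a relative-index-one gradient flow in the product can only move within a single factor. The one point deserving mild care is confirming that the reduction to $\Z/2\Z$ coefficients is harmless, but since part~6 supplies a set-level identification $\M{G}{p}{q}\iso\M{g_i}{p_i}{q_i}$, the cardinalities agree exactly and therefore agree modulo $2$, so no orientation or sign considerations enter.
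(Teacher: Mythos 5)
Your proof is correct and follows essentially the same route as the paper's: both unwind the Morse boundary operator $\bdry p = \sum_q (\#\M{G}{p}{q})\,q$ and invoke part~6 of Lemma~\ref{lem_productManifolds} to conclude that only $q$ differing from $p$ in a single coordinate contribute, with multiplicity $\#\M{G}{p}{q} = \#\M{g_i}{p_i}{q_i}$, so the sum reassembles into $\sum_{i=1}^n (p_1,\ldots,\bdry p_i,\ldots,p_n)$. Your version is slightly more explicit about the bookkeeping (splitting into inner sums per coordinate and noting that $\Z/2\Z$ coefficients make the set-level identification of moduli spaces sufficient), but the substance is identical.
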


\begin{proof}
By the definition of the Morse boundary operator, $\bdry p = \sum (\#\M{G}{p}{q})q$, where the sum is over all $q$ such that there is a gradient flow from $p$ to $q$ and $\mu(q) = \mu(p) - 1$.
By the lemma, if $p = (p_1, \ldots , p_n)$, then $\bdry p$ is a linear combination of terms of the form $q = (p_1, \ldots , q_i, \ldots , p_n)$, and where a term $(p_1, \ldots , q_i, \ldots , p_n)$ occurs if and only if there is a flow in $M_i$ from $p_i$ to $q_i$ and $\mu(p_i,q_i) = 1$.
The result follows since $\#\M{G}{p}{q} = \#\M{g_i}{p_i}{q_i}$ for this choice of $p$ and $q$.
\end{proof}

The tensor product of chain complexes $\cc{A}$ and $\cc{B}$ has
\[ (\cc{A} \otimes \cc{B})_k = \bigoplus_{i+j=k} (A_i \otimes B_j) \]
as its $k$-th chain group, and the differential is the linear extension of $\bdry(a,b) = (\bdry_A a,b) + (a,\bdry_B b)$ when using $\Z/2\Z$ coefficients.
For an $n$-fold tensor product, a simple induction argument shows that the direct sum runs over all $i_1 + \cdots + i_n = k$, namely 
\[ \left(\cc{A}^{(1)}\otimes\cdots\otimes\cc{A}^{(n)}\right)_k=\bigoplus_{i_1+\ldots+i_n=k}\left(A^{(1)}_{i_1}\otimes\cdots\otimes A^{(n)}_{i_n}\right).\]
The differential generalizes to the $n$-fold case as
\[ \bdry(a_1, \ldots , a_n) = \sum_{i=1}^n (a_1 , \ldots , \bdry_i a_i , \ldots , a_n), \]

\begin{proposition}\label{prop_tensor}
$\cc{C}(G) \iso \cc{C}(g_1) \otimes \cdots \otimes \cc{C}(g_n)$
\end{proposition}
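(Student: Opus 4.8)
The plan is to construct an explicit degree-preserving isomorphism at the level of generators and then verify that it is a chain map; since everything takes place over the field $\Z/2\Z$, both sides are free modules and there is no sign bookkeeping to do. I would define the candidate isomorphism $\Phi \colon \cc{C}(G) \to \cc{C}(g_1) \otimes \cdots \otimes \cc{C}(g_n)$ on generators. By part 1 of Lemma~\ref{lem_productManifolds}, the critical points of $G$ --- which form a $\Z/2\Z$-basis of $\cc{C}(G)$ --- are exactly the tuples $p = (p_1,\ldots,p_n)$ with each $p_i$ a critical point of $g_i$, and the elementary tensors $p_1 \otimes \cdots \otimes p_n$ form a basis of the tensor product. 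So I set $\Phi(p_1,\ldots,p_n) = p_1 \otimes \cdots \otimes p_n$ and extend linearly; this is manifestly a bijection between bases, hence an isomorphism of the underlying vector spaces.

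Next I would check that $\Phi$ preserves the grading. The generator $p$ lies in degree $\mu(p)$ of $\cc{C}(G)$, whereas the elementary tensor $p_1 \otimes \cdots \otimes p_n$ with each $p_j$ of index $i_j = \mu(p_j)$ lies in degree $i_1 + \cdots + i_n$ of the tensor product, since its $k$-th chain group is $\bigoplus_{i_1+\cdots+i_n=k} A^{(1)}_{i_1} \otimes \cdots \otimes A^{(n)}_{i_n}$. By part 5 of Lemma~\ref{lem_productManifolds} we have $\mu(p) = \sum_j \mu(p_j)$, so the two degrees agree and $\Phi$ is an isomorphism of graded vector spaces.

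Finally I would verify that $\Phi$ intertwines the differentials. Corollary~\ref{cor_boundary} gives $\bdry p = \sum_{i=1}^n (p_1,\ldots,\bdry p_i,\ldots,p_n)$, so $\Phi(\bdry p) = \sum_{i=1}^n p_1 \otimes \cdots \otimes \bdry p_i \otimes \cdots \otimes p_n$. This is precisely the $n$-fold tensor-product differential applied to $\Phi(p)$, as recorded immediately before the statement. Hence $\Phi \bdry = \bdry \Phi$, and $\Phi$ is an isomorphism of chain complexes, as claimed.

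The substantive content of the proposition is thus already packaged in Corollary~\ref{cor_boundary}, and the main obstacle lives one level down, in the moduli-space factorization of part 6 of Lemma~\ref{lem_productManifolds}: the fact that a relative-index-one gradient flow of $G$ must vary in a single coordinate $i$, together with the counting identity $\#\M{G}{p}{q} = \#\M{g_i}{p_i}{q_i}$. That identity is exactly what forces the Morse boundary operator to agree term-by-term with the tensor differential, and the use of $\Z/2\Z$ coefficients is what lets the two match on the nose, with no signs to reconcile.
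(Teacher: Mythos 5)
Your proof is correct and follows essentially the same route as the paper's: both identify critical points of $G$ with elementary tensors via Lemma~\ref{lem_productManifolds} (using part 5 to match the gradings), and then invoke Corollary~\ref{cor_boundary} together with the tensor-product differential to upgrade the graded isomorphism to a chain-complex isomorphism. Your closing observation that the real content lives in part 6 of the lemma (the factorization $\#\M{G}{p}{q} = \#\M{g_i}{p_i}{q_i}$ feeding into Corollary~\ref{cor_boundary}) is accurate and consistent with how the paper structures the argument.
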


\begin{proof}
By Lemma~\ref{lem_productManifolds}, critical points of $G$ with index $k$ are in bijection with tuples $(p_1, \ldots , p_n)$ where $p_i$ is a critical point of $g_i$ and $\sum \mu(p_i) = k$.
Since such tuples freely generate $(\cc{C}(g_1) \otimes \cdots \otimes \cc{C}(g_n))_k$ and critical points of index $k$ freely generate $C_k(G)$, there is an isomorphism of graded vector spaces $\cc{C}(G) \iso \cc{C}(g_1) \otimes \cdots \otimes \cc{C}(g_n)$.
By Corollary~\ref{cor_boundary} and the definition of the tensor product differential, this extends to an isomorphism as chain complexes.
\end{proof}

A filtered chain complex is a functor from $(\Z,\le)$ to the category of chain complexes.
The tensor product of {\it filtered} chain complexes $\cc{A}^*$ and $\cc{B}^*$, denoted $(\cc{A} \otimes_f \cc{B})^*$, has
\[ (\cc{A} \otimes_f \cc{B})^c_k := \bigoplus_{a+b \le c}\  \bigoplus_{i+j=k} (A_i^a \otimes B_j^b) \]
as its $k$-th chain group at filtration value $c$, and boundary maps given by the restriction of the usual tensor product.

The Morse complex of a Morse function $f$ is naturally a filtered chain complex $\cc{C}^*(f)$, where $\cc{C}^a(f)$ is the subcomplex of $\cc{C}(f)$ generated by critical points with value less than $a$; this is a functor with domain $(\Z,\le)$ so long as we restrict attention to a discrete set of real numbers $a$ that are interleaved between adjacent critical points of $f$.
The differential on $\cc{C}^a(f)$ is the restriction of the differential of $\cc{C}(f)$ which is well-defined because the gradient flow construction of the boundary ensures that $f(q) < f(p)$ for any $q \in \bdry p$.

\begin{theorem}
\label{thm:chain-iso-product}
Let $M$ be a product of manifolds $M = M_1 \times \cdots \times M_n$, and let Morse function $G \colon M \to \R$ be defined by $G(x_1,\ldots,x_n) = g_1(x_1) + \cdots + g_n(x_n)$, where each function $g_i \colon M_i \to \R$ is Morse and satisfies the transversality condition.
Then $\cc{C}^*(G) \iso \cc{C}^*(g_1) \otimes_f \cdots \otimes_f \cc{C}^*(g_n)$ as filtered chain complexes.
\end{theorem}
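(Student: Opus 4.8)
The plan is to upgrade the unfiltered chain isomorphism of Proposition~\ref{prop_tensor} to a filtered one by showing that the single map $\Phi \colon \cc{C}(G) \xrightarrow{\iso} \cc{C}(g_1) \otimes \cdots \otimes \cc{C}(g_n)$ that sends a critical point $p = (p_1,\ldots,p_n)$ to the basis tensor $p_1 \otimes \cdots \otimes p_n$ carries the sublevelset filtration on the left to the tensor-product filtration on the right, level by level. Since $\Phi$ is already known to be an isomorphism of chain complexes, and since all the filtration maps on both sides are inclusions of subcomplexes, it suffices to verify that for each admissible filtration value $c$ the restriction of $\Phi$ gives a bijection between the degree-$\le c$ pieces; naturality with respect to the poset $(\Z, \le)$ of filtration values is then automatic, because every square in sight consists of subspace inclusions intertwined by the one global map $\Phi$.

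First I would pin down the two filtrations on generators. On the left, $\cc{C}^c(G)$ is spanned by those critical points $p$ with $G(p) < c$ (following the convention of the preceding subsection), and the additivity hypothesis gives $G(p) = g_1(p_1) + \cdots + g_n(p_n)$. On the right, by the definition of $\otimes_f$ recalled above, the basis tensor $p_1 \otimes \cdots \otimes p_n$ appears at filtration level $c$ exactly when there are values $a_i > g_i(p_i)$ with $a_1 + \cdots + a_n \le c$, which over the relevant discrete interleaving set is equivalent to $\sum_i g_i(p_i) < c$. This is precisely the condition $G(p) < c$, so $\Phi$ matches the two filtrations generator-by-generator; since both filtered pieces are spanned by the corresponding generators, $\Phi$ restricts to an isomorphism $\cc{C}^c(G) \iso (\cc{C}(g_1) \otimes_f \cdots \otimes_f \cc{C}(g_n))^c$ for every $c$.

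The remaining point is bookkeeping about which real values $c$ to admit, since each filtered complex is a functor only on a discrete interleaving set: I would fix a common discrete set of values interleaving the critical values of $G$, noting that the critical values of $G$ are precisely the sums $\sum_i g_i(p_i)$ of critical values of the factors by Lemma~\ref{lem_productManifolds}. That the boundary operator respects the filtration on both sides is inherited from the unfiltered setting, using that $G(q) < G(p)$ whenever $q \in \bdry p$ (established by the gradient-flow construction) together with Corollary~\ref{cor_boundary}, which expresses $\bdry p$ as a sum of terms differing from $p$ in a single coordinate $i$ where $g_i$ strictly decreases.

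The main obstacle is not any deep geometry — that was already absorbed into Proposition~\ref{prop_tensor} and Lemma~\ref{lem_productManifolds} — but rather the careful matching of indexing conventions: confirming that the sum of component filtration values that governs the tensor-product filtration is exactly $G$, and that coincidences of critical values (many critical points sharing a filtration level, as happens for the alkanes) do not obstruct defining both filtrations over a single common discrete poset. Once the generator-level filtration levels are shown to agree, the theorem follows formally.
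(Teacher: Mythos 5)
Your proposal is correct and takes essentially the same route as the paper's proof: both invoke Proposition~\ref{prop_tensor} for the unfiltered chain-complex isomorphism and then observe that additivity of $G$ makes the generator-level filtration conditions $G(p)<c$ and $\sum_i g_i(p_i)<c$ coincide, so the two natural filtrations agree. Your write-up is somewhat more explicit about unwinding the defining condition of $\otimes_f$ and about the discrete interleaving poset (which the paper's three-line proof glosses over), but these are refinements of the same argument rather than a different approach.
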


\begin{proof}
By Proposition~\ref{prop_tensor}, $\cc{C}(G) \iso \cc{C}(g_1) \otimes_f \cdots \otimes_f \cc{C}(g_n)$ as chain complexes.
Since $G(x_1,\ldots,x_n) = g_1(x_1) + \cdots + g_n(x_n)$, for a critical point $p$, $G(p) < a$ if and only if $g_1(p_1) + \cdots + g_n(p_n) < a$.
Thus the natural filtration of $\cc{C}^*(G)$ agrees with the natural filtration of $\cc{C}^*(g_1) \otimes_f \cdots \otimes_f \cc{C}^*(g_n)$.
\end{proof}

The homology of a tensor product can be computed using the K\"unneth formula (see Theorem~3B.5 of Ref.~\citenum{Hatcher}, for example).

\begin{figure*}[tb]
\centering
\includegraphics[width=1\textwidth]{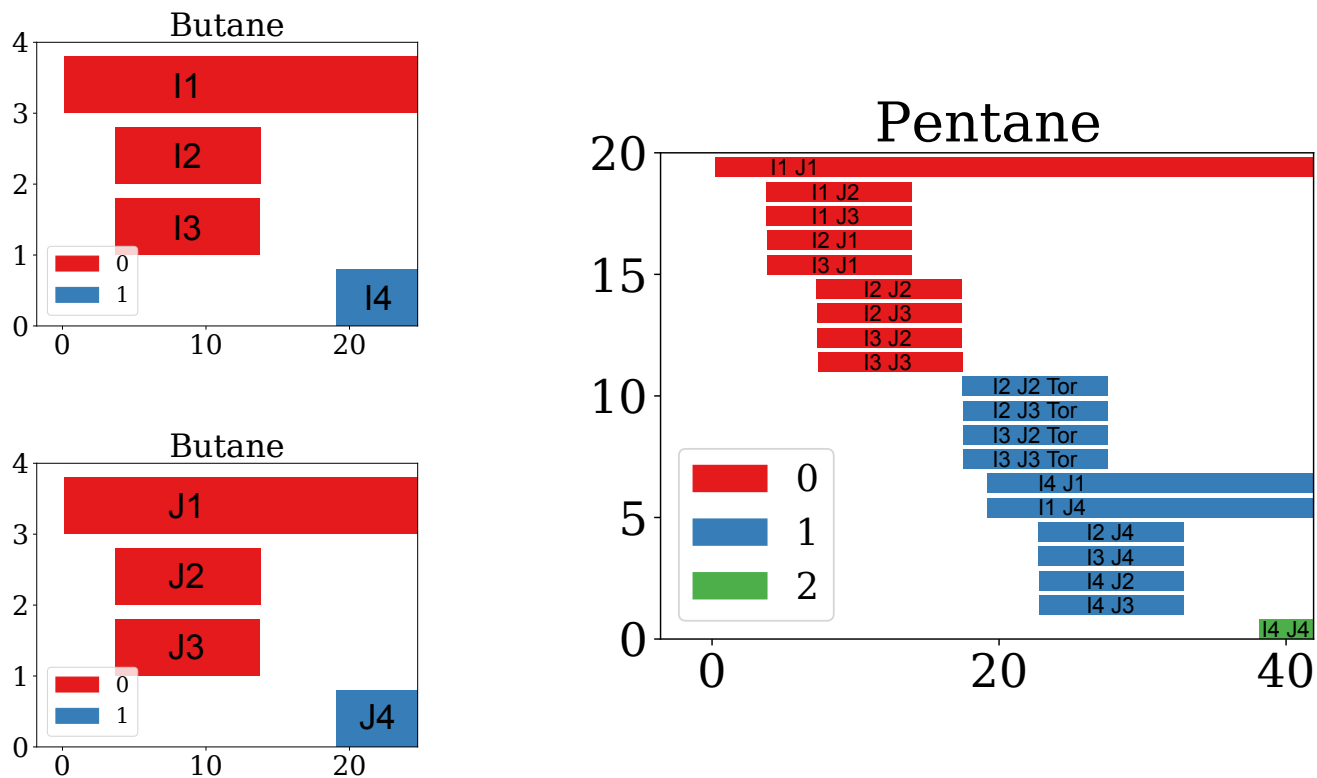}
\caption{We obtain the pentane barcodes by applying the persistent K\"{u}nneth formula to two copies of the butane barcodes.
The bar ``I2 J2" in pentane comes from the butane bars I2 and J2, and the bar ``I2 J2 Tor" is from the torsion portion of Theorem~\ref{thm:persistent-kunneth}.
The $y$-axis is a count of bars, and the $x$-axis is energy (kJ/mol).}
\label{fig:KunnethAlkanes}
\end{figure*}

\begin{theorem}[K\"unneth Formula]
The homology of two chain complexes $\cc{A}$ and $\cc{B}$ satisfies
\[ \bigoplus_{i+j=n} (H_i(\cc{A}) \otimes H_j(\cc{B})) \iso H_n(\cc{A} \otimes \cc{B}) .\]
\end{theorem}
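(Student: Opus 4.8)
The plan is to exploit that all chain groups here are vector spaces over the field $\k = \Z/2\Z$, which makes every module flat and forces the $\Tor$-correction in the general K\"unneth theorem to vanish, leaving exactly the clean isomorphism claimed. First I would reduce each complex to its homology together with an acyclic complement. Because we work over a field, for each complex $\cc{A}$ one can choose in every degree $n$ a splitting $A_n \iso B_n \oplus H_n' \oplus W_n$, where $B_n = \mathrm{im}\,\bdry$, the subspace $H_n'$ is a complement of $B_n$ inside the cycles (so $H_n' \iso H_n(\cc{A})$), and $W_n$ is a complement of the cycles on which $\bdry$ restricts to an isomorphism $W_n \xrightarrow{\iso} B_{n-1}$. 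Assembling these splittings gives a decomposition of chain complexes $\cc{A} \iso \cc{H}^A \oplus \cc{E}^A$, where $\cc{H}^A$ carries the zero differential and equals $H_\bullet(\cc{A})$, and $\cc{E}^A$ (built from the $W_n$ together with their isomorphic images) is acyclic.

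Next I would distribute the tensor product over these direct sums. Writing the analogous splitting $\cc{B} \iso \cc{H}^B \oplus \cc{E}^B$ and using bilinearity of $\otimes$, I obtain
\[ \cc{A}\otimes\cc{B} \iso (\cc{H}^A\otimes\cc{H}^B) \oplus (\cc{H}^A\otimes\cc{E}^B) \oplus (\cc{E}^A\otimes\cc{H}^B) \oplus (\cc{E}^A\otimes\cc{E}^B). \]
The first summand carries the zero differential, since both of its factors do, so its degree-$n$ homology is literally $\bigoplus_{i+j=n} H_i(\cc{A})\otimes H_j(\cc{B})$, which is the right-hand side of the theorem. It then remains only to show that the other three summands are acyclic. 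For this I would use that an acyclic complex of vector spaces over a field is contractible, so $\cc{E}^A$ admits a contracting homotopy $s$ with $\bdry s + s\,\bdry = \mathrm{id}$; then $s\otimes\mathrm{id}$ is a contracting homotopy on $\cc{E}^A\otimes\cc{B}$, because the cross terms $s\otimes\bdry^B$ appear twice and cancel in characteristic two. Hence each summand involving a factor $\cc{E}^A$ or $\cc{E}^B$ is contractible, hence acyclic, and contributes nothing to homology, giving $H_n(\cc{A}\otimes\cc{B}) \iso \bigoplus_{i+j=n} H_i(\cc{A})\otimes H_j(\cc{B})$.

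The main obstacle is not any deep computation but the care needed to justify the two structural facts, both of which rest squarely on $\k$ being a field: that the short exact sequences $0 \to Z_n \to A_n \to B_{n-1} \to 0$ split (yielding the decomposition of the first paragraph), and that acyclic implies contractible (yielding the vanishing of the cross terms). Over a general ring both fail and reintroduce the $\Tor$ terms that the statement suppresses. I would also remark that the complexes relevant to this paper are nonnegatively graded and finite-dimensional in each degree, since the alkane Morse complexes $\cc{C}(g_i)$ are finite, so there are no convergence subtleties and acyclicity of the tensor summands may be verified degreewise. With these points established the isomorphism follows, and, combined with Theorem~\ref{thm:chain-iso-product}, it computes the sublevelset homology of $\cc{C}(G)$ from those of the factors.
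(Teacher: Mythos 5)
Your proposal is correct, but it is worth noting that the paper does not actually prove this statement at all: the theorem is quoted with a pointer to Theorem~3B.5 of Hatcher's \emph{Algebraic Topology}, i.e.\ the general algebraic K\"unneth theorem over a PID for free chain complexes, whose $\Tor$ correction term vanishes here because the paper works throughout with $\Z/2\Z$ coefficients, so every chain group is a vector space. Your argument is therefore a genuinely different, self-contained route, and it is the standard field-coefficient proof: split each complex as $\cc{A} \iso \cc{H}^A \oplus \cc{E}^A$ with $\cc{H}^A$ carrying zero differential and $\cc{E}^A$ acyclic (possible since every subspace of a vector space admits a complement, so $0 \to Z_n \to A_n \to B_{n-1} \to 0$ splits), distribute $\otimes$ over $\oplus$, read off the homology of $\cc{H}^A \otimes \cc{H}^B$ degreewise, and kill the three remaining summands with the contracting homotopy $s \otimes \mathrm{id}$. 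Your verification of that last step is the one delicate point and you got it right: the cross terms $sa \otimes \bdry b$ appear once from $\bdry(s\otimes\mathrm{id})$ and once from $(s\otimes\mathrm{id})\bdry$, and cancel mod $2$ (over other fields one would insert Koszul signs, but the conclusion persists). What your route buys is transparency about hypotheses: it isolates exactly the two places where the field assumption enters (splitting of the cycle sequences, and acyclic implies contractible), both of which fail over $\Z$ and are precisely what generate the $\Tor$ term in Hatcher's general statement. What the citation buys is brevity, the general PID case, and naturality of the isomorphism, none of which the paper actually needs, since its quantitative conclusions flow through the persistent K\"unneth formula of Gakhar and Perea (Theorem~\ref{thm:persistent-kunneth}), which is itself quoted rather than derived from this lemma. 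Your closing remark about finite-dimensionality is harmless but unnecessary: homology commutes with direct sums, so no finiteness or convergence hypothesis is required anywhere in the argument.
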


This theorem is insufficient to give the persistent homology, however.
For example, in the case of the alkanes, $f_2^{-1}(-\infty,\beta+\alpha)$ includes the points $(b_1,a)$ and $(a,b_1)$, but it does not include $(b_1,b_1)$, so the chain complex is not the ``pointwise'' tensor product.

Instead, we need the version of the K\"unneth formula for persistent homology by Gakhar and Perea;\cite{GakharPerea_KunnethFormula} see their Theorems~5.12 and 5.14.
The persistent homology of a filtered chain complex $\cc{A}^*$ in degree $k$ is denoted $PH_k(A)$.
We denote the space of barcodes of $\cc{A}^*$ by $\bcd(A)$, where $\bcd_j(A)$ denotes the bars in homological dimension $j$.

\begin{theorem}[Peristent K\"unneth Formula]
\label{thm:persistent-kunneth}
There is a natural short exact sequence of graded modules
\begin{align*}
0 \to \bigoplus_{i+j=n} \left( PH_i(X) \otimes PH_j(Y)\right) \to PH_n(X \otimes_g Y) \\
\to \bigoplus_{i+j=n} \Tor(PH_i(X),PH_{j-1}(Y)) \to 0 .
\end{align*}
If $H_i(X)$ and $H_j(Y)$ are pointwise finite, then
\begin{footnotesize}
\begin{align*}
&\ \bcd_n(X \otimes_f Y) \\
=&\ \ \bigsqcup_{i+j=n} \left\{ (\ell_J + I) \cap (\ell_I + J) ~|~ I \in \bcd_i(X), J\in \bcd_j(Y) \right\}\\
& \sqcup \bigsqcup_{i+j=n} \left\{ (r_J + I) \cap (r_I + J) ~|~ I \in \bcd_i(X), J\in\bcd_{j-1}(Y) \right\} .
\end{align*}
\end{footnotesize}
Here $\ell$ and $r$ are the left and right endpoints of the interval.
\end{theorem}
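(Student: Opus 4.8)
The plan is to recognize that, over the coefficient field $\k$, a filtered chain complex is nothing but a chain complex of finitely generated graded modules over the graded principal ideal domain $R := \k[t]$, where the variable $t$ (of degree $+1$) implements the structure maps that advance the filtration parameter. Under this dictionary a generator born at filtration value $a$ becomes a shifted free module $\Sigma^a R$, so each filtered Morse chain group $\cc{C}^*(g_\ell)$ is a complex of \emph{free} graded $R$-modules, and $PH_i(\cc{C}^*(g_\ell))$ is exactly its $i$-th homology computed in graded $R$-modules. First I would check that the filtered tensor product $\otimes_f$ of the statement agrees with the graded tensor product $\otimes_R$: on free generators one has $\Sigma^a R \otimes_R \Sigma^b R \iso \Sigma^{a+b} R$, so the generator $p \otimes q$ acquires filtration value equal to the sum of the two birth values, which is precisely the bookkeeping recorded by the index set $\{a+b\le c\}$ in the definition of $(\cc{A}\otimes_f\cc{B})^c_k$. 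This reproves Theorem~\ref{thm:chain-iso-product} at the level of $R$-modules and identifies $X\otimes_f Y$ with the tensor product of two complexes of free graded $R$-modules.

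With this identification in hand, the short exact sequence is the classical algebraic K\"unneth theorem applied over the PID $R$ (the graded refinement of Theorem~3B.5 of Ref.~\citenum{Hatcher}, which the statement quotes). Because $R$ is a PID and the chain groups are free, every cycle and boundary submodule is again free; hence each $C_k$ splits as $Z_k \oplus B_{k-1}$ compatibly with the grading, and the usual derivation produces the natural sequence
\begin{align*}
0 \to \bigoplus_{i+j=n} \left(PH_i(X)\otimes_R PH_j(Y)\right) \to PH_n(X\otimes_f Y)\\
\to \bigoplus_{i+j=n}\Tor^R\!\left(PH_i(X),PH_{j-1}(Y)\right)\to 0,
\end{align*}
with all maps graded $R$-linear. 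The same $R$-linear splitting $C_k \iso Z_k\oplus B_{k-1}$ furnishes a (non-natural but grading-respecting) splitting of this sequence, so $PH_n(X\otimes_f Y)$ decomposes as the direct sum of the tensor summand and the $\Tor$ summand as graded $R$-modules.

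For the barcode formula I would then invoke the structure theorem for finitely generated graded modules over the PID $R$ --- equivalently the interval decomposability of pointwise finite persistence modules --- to write $PH_i(X)\iso\bigoplus_I R_I$ and $PH_j(Y)\iso\bigoplus_J R_J$, where an interval $[\ell,r)$ corresponds to $R_{[\ell,r)}\iso \Sigma^\ell R/(t^{\,r-\ell})$, and a semi-infinite interval to a shifted free module $\Sigma^\ell R$. Since $\otimes_R$ and $\Tor^R$ commute with direct sums, it suffices to compute each $R_I\otimes_R R_J$ and $\Tor^R(R_I,R_J)$ for a single pair of intervals. Using the free resolution $0\to \Sigma^{r} R \xrightarrow{\ t^{\,r-\ell}\ } \Sigma^{\ell}R \to R_{[\ell,r)}\to 0$ these are routine: the tensor product is the interval module on $(\ell_J+I)\cap(\ell_I+J)$ (left endpoints add, lengths take the minimum), and the shifted $\Tor$ term is the interval module on $(r_J+I)\cap(r_I+J)$ (right endpoints add, with the homological degree shift accounting for the $j-1$). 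Feeding these computations into the split sequence of the previous paragraph assembles the claimed disjoint union of barcodes.

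The step I expect to be the main obstacle is precisely the passage from the short exact sequence to a multiset equality of barcodes, because a short exact sequence of persistence modules does \emph{not} in general add barcodes (a bar $[0,1)$ and a bar $[1,2)$ can extend to a single bar $[0,2)$). What rescues the argument is that the K\"unneth sequence here splits \emph{as graded $R$-modules}, and this splitting is genuine rather than merely pointwise: it descends from the freeness of the chain groups over the PID $R$, exactly as in the ungraded case. Verifying carefully that the classical splitting construction is compatible with the internal $t$-grading (so that it is a morphism of persistence modules, not just of vector spaces in each degree) is the delicate point; the remaining ingredients --- the identification of $\otimes_f$ with $\otimes_R$ and the interval-wise $\otimes$/$\Tor$ computations --- are conceptually straightforward filtration bookkeeping.
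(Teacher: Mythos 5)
The first thing to say is that the paper does not prove this statement at all: it is imported directly from Gakhar and Perea (Ref.~\citenum{GakharPerea_KunnethFormula}, Theorems~5.12 and~5.14), and Appendix~C only applies it. So there is no in-paper proof to compare against; your proposal has to be judged as a reconstruction of the cited result. Judged that way, it is essentially correct for the setting the paper actually uses, and it runs along the mechanism that underlies the cited work in the discrete case: identify $\Z$-indexed filtered chain complexes with chain complexes of graded $\k[t]$-modules, observe that the filtered Morse complexes are complexes of \emph{free} graded modules with $\Sigma^a R \otimes_R \Sigma^b R \iso \Sigma^{a+b}R$ (which is exactly the bookkeeping in the paper's definition of $\otimes_f$), run the classical K\"unneth argument over the graded PID $R=\k[t]$, and finish with the structure theorem and the interval-wise $\otimes$/$\Tor$ computations. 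Your interval computations are right: lengths are minima, left endpoints add for the tensor summand, right endpoints add for the $\Tor$ summand, and free summands (semi-infinite bars) have vanishing $\Tor$, matching the stated convention that $r_I=+\infty$ contributes no bar. You also correctly isolate the one genuinely delicate point: a short exact sequence of persistence modules does not by itself determine the middle barcode, so one must check the classical splitting can be chosen homogeneous; this follows, as you say, because graded submodules of free graded $\k[t]$-modules are graded free, so $0\to Z_k\to C_k\to B_{k-1}\to 0$ splits in the graded category and the usual splitting construction is a map of persistence modules.

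Two caveats you should make explicit. First, freeness (or at least flatness) of the chain groups of one factor is a genuine hypothesis of any K\"unneth theorem, not a convenience; the statement as quoted in the paper suppresses it, and your proof covers only filtered complexes with free chain groups --- which is true of the filtered Morse complexes $\cc{C}^*(f)$ that the paper feeds into the theorem, so nothing is lost for the application. Second, your dictionary with $\k[t]$ presupposes $\Z$-indexed (discrete) filtrations; for genuinely $\R$-indexed persistence the monoid ring $\k[\R_{\ge 0}]$ is not a PID, and handling that case is where Gakhar--Perea do real additional work (and why the pointwise-finiteness hypothesis and interval decomposability enter in their formulation). Since the paper restricts to filtration values interleaved between the finitely many critical values of the alkane PEL, your argument suffices for every use the paper makes of the theorem.
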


By convention if the right endpoint of interval $I$ is $r_I = +\infty$, then the bar $r_I + J$ does not appear.

\begin{corollary}\label{cor_infiniteBars}
Critical point $(p , q)$ generates a semi-infinite bar if and only if $p$ and $q$ generate semi-infinite bars.
\end{corollary}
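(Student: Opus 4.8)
The plan is to read the statement directly off the explicit barcode description in Theorem~\ref{thm:persistent-kunneth}. By Theorem~\ref{thm:chain-iso-product}, writing $X = \cc{C}^*(g_1)$ and $Y = \cc{C}^*(g_2)$, the filtered complex $\cc{C}^*(G)$ is isomorphic to $X \otimes_f Y$, and under this isomorphism (via Proposition~\ref{prop_tensor}) the critical point $(p,q)$ of $G$ corresponds to the tensor of the generators $p$ of $X$ and $q$ of $Y$. By Lemma~\ref{lem_MorsePersistence}, a critical point \emph{generates} a bar precisely when it is the birth generator of that bar, and the birth value equals the critical value; here $G(p,q) = g_1(p) + g_2(q)$.

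First I would locate which summand of the barcode formula can contain semi-infinite bars. Theorem~\ref{thm:persistent-kunneth} presents $\bcd_n(X \otimes_f Y)$ as the disjoint union of a ``tensor'' family, indexed by pairs $(I,J)$ with bar $(\ell_J + I) \cap (\ell_I + J)$, and a ``Tor'' family, indexed by pairs with bar $(r_J + I) \cap (r_I + J)$. By the stated convention, a Tor bar built from an interval whose right endpoint is $+\infty$ does not appear, so every surviving Tor bar arises from intervals $I,J$ both having finite right endpoints; such a bar has right endpoint $r_I + r_J < +\infty$ and is therefore finite. Hence no semi-infinite bar comes from the Tor family, and every semi-infinite bar of $X \otimes_f Y$ is a tensor bar.

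Next I would compute the endpoints of a tensor bar. The two translated intervals $\ell_J + I$ and $\ell_I + J$ share the left endpoint $\ell_I + \ell_J$, so the bar is born at $\ell_I + \ell_J = g_1(p) + g_2(q) = G(p,q)$, confirming that this is indeed the bar generated by the critical point $(p,q)$. Its right endpoint is $\min(\ell_J + r_I,\ \ell_I + r_J)$, which equals $+\infty$ if and only if both $r_I = +\infty$ and $r_J = +\infty$, since $\ell_I$ and $\ell_J$ are finite. As $r_I = +\infty$ is exactly the statement that the bar generated by $p$ is semi-infinite (and likewise for $q$), this yields the claimed equivalence: $(p,q)$ generates a semi-infinite bar iff both $p$ and $q$ do.

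The one point requiring care---and the only real obstacle---is the bookkeeping of the correspondence between critical points and bars when several critical points share a critical value, as happens for the alkanes. Here I would invoke the perturbation already used in the proof of Lemma~\ref{lem_MorsePersistence}, which fixes a consistent matching of births to critical points, so that the pair $(I,J)$ indexing a tensor bar is unambiguously the pair of birth generators $(p,q)$. The general $n$-fold statement then follows by grouping $g_2 + \cdots + g_n$ into a single factor and iterating, or equivalently by induction on $n$ using the associativity of $\otimes_f$.
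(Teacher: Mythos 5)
Your proposal is correct and follows the same route the paper intends: Corollary~\ref{cor_infiniteBars} is stated as an immediate consequence of the persistent K\"unneth formula (Theorem~\ref{thm:persistent-kunneth}) together with the convention that Tor-type bars built from an interval with $r_I = +\infty$ do not appear, which is exactly the case analysis you carry out (Tor bars are always finite; tensor bars $(\ell_J + I) \cap (\ell_I + J)$ are semi-infinite precisely when both $r_I = r_J = +\infty$). Your additional care with the birth-value computation $\ell_I + \ell_J = G(p,q)$, the tie-breaking perturbation from Lemma~\ref{lem_MorsePersistence}, and the iteration to the $n$-fold case simply makes explicit what the paper leaves implicit.
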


We are now prepared to prove that all non-infinite bars in the sublevelset persistent homology have the same length; see Figures~\ref{fig:alkane-barcodes} and~\ref{fig:alkane-diagrams}.

\begin{corollary}\label{cor_finiteLength}
If $g_1 = \cdots = g_n$ and all finite bars in the persistence of $g_1$ are of the same length, $L$, then all finite bars in the persistence of $G(x_1,\ldots,x_n):=\sum_{i=1}^n g_i(x_i)$ have length $L$.
\end{corollary}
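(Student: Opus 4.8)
The plan is to apply the Persistent K\"unneth Formula (Theorem~\ref{thm:persistent-kunneth}) iteratively, reducing the claim for $G$ to a single length identity for a two-factor filtered tensor product. First I would invoke Theorem~\ref{thm:chain-iso-product} to write $\cc{C}^*(G) \iso \cc{C}^*(g_1) \otimes_f \cdots \otimes_f \cc{C}^*(g_n)$ as filtered chain complexes, so that $\bcd(G)$ is obtained from the barcodes $\bcd(g_i)$ by repeatedly forming filtered tensor products. Writing $G_k := g_1 + \cdots + g_k$ and using associativity of $\otimes_f$, Theorem~\ref{thm:chain-iso-product} applied to the partial sum gives $\cc{C}^*(G_k) \iso \cc{C}^*(G_{k-1}) \otimes_f \cc{C}^*(g_k)$, which sets up an induction on the number of factors (the pointwise-finiteness hypothesis of Theorem~\ref{thm:persistent-kunneth} holds throughout, as each Morse complex is finite dimensional).

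The heart of the argument is a length computation valid for any two intervals $I$ and $J$. In the barcode formula of Theorem~\ref{thm:persistent-kunneth}, a \emph{tensor bar} has the form $(\ell_J + I) \cap (\ell_I + J)$; the two shifted intervals share the left endpoint $\ell_I + \ell_J$, so their intersection is $[\,\ell_I+\ell_J,\ \min(\ell_J + r_I,\ \ell_I + r_J)\,)$, of length $\min(r_I-\ell_I,\ r_J-\ell_J) = \min(\mathrm{len}(I),\mathrm{len}(J))$. Dually a \emph{torsion bar} has the form $(r_J + I) \cap (r_I + J)$; the two shifted intervals share the right endpoint $r_I + r_J$, so their intersection is $[\,\max(r_J + \ell_I,\ r_I + \ell_J),\ r_I + r_J\,)$, again of length $\min(\mathrm{len}(I),\mathrm{len}(J))$. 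Thus every bar produced by a single application of the formula has length equal to the minimum of the lengths of its two contributing bars.

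With this in hand I would run the induction. The base case $n=1$ is the hypothesis. For the inductive step, set $X = \cc{C}^*(G_{k-1})$, whose finite bars all have length $L$ by the inductive hypothesis (the remaining bars being semi-infinite), and $Y = \cc{C}^*(g_k)$, whose finite bars also all have length $L$. For a tensor bar from $I \in \bcd(X)$ and $J \in \bcd(Y)$ the length is $\min(\mathrm{len}(I),\mathrm{len}(J))$: if both inputs are semi-infinite the output is semi-infinite (consistent with Corollary~\ref{cor_infiniteBars}), and otherwise at least one input is finite of length $L$ while the other is $L$ or $\infty$, so the minimum equals $L$. For a torsion bar, the endpoint convention stated after Theorem~\ref{thm:persistent-kunneth}, together with its symmetric counterpart (valid since $\Tor$ vanishes on the free, i.e.\ semi-infinite, summands), forces both $r_I$ and $r_J$ to be finite; both inputs then have length $L$, so the torsion bar has length $L$. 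Hence every finite bar of $X \otimes_f Y \iso \cc{C}^*(G_k)$ has length $L$, closing the induction, and the case $k=n$ proves the corollary.

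The main obstacle is bookkeeping rather than conceptual: one must verify that semi-infinite bars arise exactly from the tensor term with two semi-infinite inputs and never leak into the finite part, and that the torsion term contributes nothing whenever either input is semi-infinite. Both facts follow from Corollary~\ref{cor_infiniteBars} and the endpoint convention, so once the two-line identity $\min(\mathrm{len}(I),\mathrm{len}(J))$ is established for both bar types, the finite-length-$L$ property propagates through the tensor product automatically.
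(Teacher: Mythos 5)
Your proof is correct and follows essentially the same route as the paper's: induction on the number of tensor factors via the persistent K\"unneth formula, with the key observation that the intersection of suitably shifted bars has length equal to the minimum of the lengths of the two contributing bars. You are in fact somewhat more careful than the paper's own brief proof, which aligns left endpoints and thus only explicitly treats the tensor-term bars, whereas you also verify the torsion-term (right-endpoint-aligned) bars and the endpoint convention that keeps semi-infinite inputs out of that term.
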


\begin{proof}
By assumption this holds for the case $n=1$.
Suppose it holds for $n-1$.
A finite bar is the intersection of a bar in $PH_i(G_{n-1})$ with one in $PH_j(g_n)$, at least one of which is finite.
The length of an intersection of bars (with starting endpoints shifted to be identical) is equal to the length of the shorter of the two bars, which in this case is length $L$.
\end{proof}

\subsection{Persistence of the Alkanes}\label{sec:pers-description}

We apply Theorem~\ref{thm:persistent-kunneth} and Corollaries~\ref{cor_infiniteBars} and~\ref{cor_finiteLength} to the sublevelset persistence of the alkane energy function, $f_n(\phi_1,\ldots,\phi_n) = \sum_{i=1}^n f_1(\phi_i)$.
See Figure~\ref{fig:KunnethAlkanes}.
These computations are based off of our computations for butane, from which we know that the semi-infinite bars of butane are those generated by critical points $a$ and $d$.

\begin{definition}
Let $f_n\colon (S^1)^n\to\R$ be the alkane energy function, and let $k\le n$.
For $i\le k$ and $j\le n-k$, we say an index $k$ critical point $(\phi_1,\ldots,\phi_n)$ of $f_n$ is of {\it class $(k,i,j)$} if the ordered list of $\phi_\ell$ points consists of
\begin{itemize}
\item $i$ copies of $c_1$ or $c_2$, and hence $k-i$ copies of $d$, and
\item $j$ copies of $b_1$ or $b_2$, and hence $n-k-j$ copies of $a$.
\end{itemize}
\end{definition}

The motivation for this definition is revealed in the following immediate lemma.

\begin{lemma}
All critical points of $f_n\colon (S^1)^n\to\R$ of class $(k,i,j)$ have the same energy value
$E(n,k,i,j):=i\gamma+(k-i)\delta+j\beta+(n-k-j)\alpha$.
\end{lemma}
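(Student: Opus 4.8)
The plan is to exploit the additive structure $f_n(\phi_1,\ldots,\phi_n)=V(\phi_1)+\cdots+V(\phi_n)$ from Section~\ref{ssec:analytical}, combined with the observation that the butane function $V$ assumes a value determined solely by the \emph{type} of its critical point. Recall that the six critical points of $V$ carry the values $V(a)=\alpha$, $V(b_1)=V(b_2)=\beta$, $V(c_1)=V(c_2)=\gamma$, and $V(d)=\delta$.

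First I would unwind the definition of class $(k,i,j)$: the ordered list of coordinates of such a point consists of $i$ entries from $\{c_1,c_2\}$, $k-i$ entries equal to $d$, $j$ entries from $\{b_1,b_2\}$, and $n-k-j$ entries equal to $a$ (the counts summing to $n$, with index $k$, as guaranteed by Theorem~\ref{thm:fn-morse}). Evaluating $f_n$ by summing $V$ over the coordinates and collecting terms by type then yields at once
\[
f_n(\phi_1,\ldots,\phi_n)=i\gamma+(k-i)\delta+j\beta+(n-k-j)\alpha=E(n,k,i,j),
\]
since each local-maximum coordinate contributes $\gamma$, each global-maximum coordinate contributes $\delta$, each local-minimum coordinate contributes $\beta$, and each global-minimum coordinate contributes $\alpha$.

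The one point worth flagging---and indeed the reason the energy is a class invariant at all---is that $V(b_1)=V(b_2)$ and $V(c_1)=V(c_2)$, so the sum is insensitive to \emph{which} of the two local minima or local maxima occupies a given coordinate. Hence any two critical points with the same counts share the same energy. There is no genuine obstacle here; the lemma is a direct bookkeeping consequence of the additive decomposition, which is exactly why it is labeled ``immediate.''
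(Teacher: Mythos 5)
Your proposal is correct and follows the paper's own argument exactly: both evaluate $f_n(\phi_1,\ldots,\phi_n)=\sum_i f_1(\phi_i)$ using the values $f_1(a)=\alpha$, $f_1(b_1)=f_1(b_2)=\beta$, $f_1(c_1)=f_1(c_2)=\gamma$, $f_1(d)=\delta$ and collect terms by the counts defining the class $(k,i,j)$. Your added remark that the invariance rests on $V(b_1)=V(b_2)$ and $V(c_1)=V(c_2)$ is a nice explicit flag of what the paper leaves implicit, but it is the same proof.
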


\begin{proof}
This follows since $f_n(\phi_1,\ldots,\phi_n) = \sum_{i=1}^n f_1(\phi_i)$, where $f_1(c_1)=f_1(c_2)=\gamma$, $f_1(d)=\delta$, $f_1(b_1)=f_1(b_2)=\beta$, and $f_1(a)=\alpha$.
\end{proof}

Let $n_1,n_2,\ldots, n_m$ be integers with $n_1+n_2+\ldots+n_m=n$.
The {\it multinomial coefficient}, which is a generalization of the binomial coefficient, is defined as
\[ \binom{n}{n_1,\ldots,n_m} := \frac{n!}{n_1!\cdot \ldots \cdot n_m!}.\]
It is equal to the number of ways, from a collection of $n$ objects, to choose $n_1$ objects to go in box 1, to choose $n_2$ objects to go in box 2, \ldots, and to choose $n_m$ objects to go in box $m$.
Note that $0!=1$.
If any of the integers $n_i$ are negative or greater than $n$, then $\binom{n}{n_1,\ldots,n_m}=0$.

\begin{lemma}\label{lem:num-class-kij}
The number of critical points of class $(k,i,j)$ is $2^{i+j}\binom{n}{i,k-i,j,n-k-j}$.
\end{lemma}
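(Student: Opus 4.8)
The plan is to count the class $(k,i,j)$ critical points by first choosing which of the $n$ coordinate positions receive each of the four types of butane critical point, and then counting the independent choices available within each type. Recall from the definition that a class $(k,i,j)$ critical point is an ordered tuple $(\phi_1,\ldots,\phi_n)$ in which exactly $i$ of the coordinates equal $c_1$ or $c_2$, exactly $k-i$ equal $d$, exactly $j$ equal $b_1$ or $b_2$, and the remaining $n-k-j$ equal $a$.

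First I would partition the index set $\{1,\ldots,n\}$ into the four blocks corresponding to these four types: the $i$ positions holding a ``$c$-type'' critical point, the $k-i$ positions holding $d$, the $j$ positions holding a ``$b$-type'' critical point, and the $n-k-j$ positions holding $a$. The number of ordered ways to make such an assignment is precisely the multinomial coefficient $\binom{n}{i,\,k-i,\,j,\,n-k-j}$, since we are distributing $n$ distinguishable positions into four labeled boxes of the prescribed sizes. When one of the block sizes is negative or exceeds $n$ the multinomial coefficient vanishes, matching the fact that no such class exists.

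Next I would account for the internal choices within the two ambiguous types. Each of the $i$ ``$c$-type'' positions can independently be assigned either $c_1$ or $c_2$, contributing a factor of $2^i$, and each of the $j$ ``$b$-type'' positions can independently be assigned either $b_1$ or $b_2$, contributing a factor of $2^j$; the $d$ positions and $a$ positions each admit only a single choice. Multiplying the position assignment by these internal choices gives
\[
2^{i}\cdot 2^{j}\cdot\binom{n}{i,\,k-i,\,j,\,n-k-j}=2^{i+j}\binom{n}{i,\,k-i,\,j,\,n-k-j},
\]
as claimed.

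This argument is entirely elementary combinatorics, so I do not anticipate a serious obstacle; the only point requiring care is the bookkeeping. In particular, I would verify that the four block sizes genuinely sum to $n$, namely $i+(k-i)+j+(n-k-j)=n$, and that the exponent on $2$ counts exactly those positions ($c$-type together with $b$-type) at which an honest binary choice occurs, rather than, say, all $n$ positions or all $k$ index-one positions. Confirming this alignment between the combinatorial count and the labeling in the definition is the whole content of the lemma.
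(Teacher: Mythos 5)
Your proof is correct and follows essentially the same approach as the paper: both arguments distribute the $n$ coordinate positions among the four types of butane critical points via the multinomial coefficient $\binom{n}{i,\,k-i,\,j,\,n-k-j}$, and then account for the binary choice ($c_1$ vs.\ $c_2$, or $b_1$ vs.\ $b_2$) at each of the $i+j$ ambiguous positions to obtain the factor $2^{i+j}$. Your additional bookkeeping checks (block sizes summing to $n$, vanishing of the coefficient in degenerate cases) are sound but not needed beyond what the paper states.
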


\begin{proof}
Among its $n$ entries $(\phi_1,\ldots,\phi_n)$, a critical point of class $(k,i,j)$ has $i$ copies of $c_1$ or $c_2$, $k-i$ copies of $d$, $j$ copies of $b_1$ or $b_2$, and $n-k-j$ copies of $a$.
Hence the lemma follows from the definition of the multinomial coefficient, where the constant $2^{i+j}$ appears because there are two choices for each of the $i$ copies of $c_1$ or $c_2$, and there are two choices for each of the $j$ copies of $b_1$ or $b_2$.
\end{proof}

The following theorem gives the complete classification of the sublevelset persistent homology of all of the alkanes.

\begin{theorem}
\label{thm:alkane-ph-characterization}
Consider the $k$-dimensional sublevelset persistent homology barcodes of the alkane PEL $f_n\colon (S^1)^n\to\R$.
Let $i\le k$ and let $j\le n-k$.
The number of bars that appear in the $k$-dimensional sublevelset persistent homology with birth energy value equal to $E(n,k,i,j)=i\gamma+(k-i)\delta+j\beta+(n-k-j)\alpha$ is
\[ 2^{i+j}\left(\sum_{\ell=0}^{i}(-1)^{\ell}\binom{n}{i-\ell,k-i,j+\ell,n-k-j}\right). \]
Furthermore, this bar is semi-infinite if and only if $i=j=0$, and otherwise has length $L=\gamma-\beta$.
These are the only bars that appear.
\end{theorem}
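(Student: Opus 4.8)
The plan is to read off the birth value of every $k$-dimensional bar as a critical value of an index-$k$ critical point via Lemma~\ref{lem_MorsePersistence}, to partition the index-$k$ critical points into the classes $(k,i,j)$ of Lemma~\ref{lem:num-class-kij}, and then to count how many critical points in each class are \emph{births} of $k$-bars rather than \emph{deaths} of $(k-1)$-bars. Writing $N(k,i,j)$ for the number of $k$-dimensional bars born at energy $E(n,k,i,j)$, and $P(k,i,j):=2^{i+j}\binom{n}{i,k-i,j,n-k-j}$ for the number of critical points of class $(k,i,j)$ (Lemma~\ref{lem:num-class-kij}), the goal is to show that $N(k,i,j)$ equals the stated alternating sum. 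The completeness clause (``these are the only bars'') is then immediate, since every $k$-bar is born at the value of some index-$k$ critical point, and every index-$k$ critical point lies in a unique class $(k,i,j)$ with $0\le i\le k$ and $0\le j\le n-k$.

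First I would set up a death-counting recursion. By Lemma~\ref{lem_MorsePersistence}, each index-$k$ critical point either gives birth to a $k$-bar or kills a $(k-1)$-bar, so $P(k,i,j)=N(k,i,j)+(\text{deaths at energy }E(n,k,i,j))$. By Corollary~\ref{cor_finiteLength} every finite bar has length $L=\gamma-\beta$, so a $(k-1)$-bar dying at $E(n,k,i,j)$ must have been born at $E(n,k,i,j)-L$; a short computation shows that
\[ E(n,k,i,j)-(\gamma-\beta)=E(n,k-1,i-1,j+1), \]
which is exactly the birth energy of class $(k-1,i-1,j+1)$, a finite class since $j+1\ge 1$. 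Hence the number of deaths at $E(n,k,i,j)$ equals $N(k-1,i-1,j+1)$, yielding the recursion $N(k,i,j)=P(k,i,j)-N(k-1,i-1,j+1)$.

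Next I would solve this by induction on $i$, which strictly decreases as $k$ decreases and $j$ increases. The base case $i=0$ has no class $(k-1,-1,j+1)$ to subtract, so $N(k,0,j)=P(k,0,j)=2^{j}\binom{n}{0,k,j,n-k-j}$, matching the formula. For the inductive step, substituting the hypothesis for $N(k-1,i-1,j+1)$ and reindexing $\ell\mapsto\ell-1$ gives
\begin{align*}
N(k,i,j)
&= 2^{i+j}\binom{n}{i,k-i,j,n-k-j}\\
&\quad + 2^{i+j}\sum_{\ell=1}^{i}(-1)^{\ell}\binom{n}{i-\ell,k-i,j+\ell,n-k-j}\\
&= 2^{i+j}\sum_{\ell=0}^{i}(-1)^{\ell}\binom{n}{i-\ell,k-i,j+\ell,n-k-j},
\end{align*}
as claimed. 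The semi-infinite/finite dichotomy then follows from Corollaries~\ref{cor_infiniteBars} and~\ref{cor_finiteLength}: a product critical point generates a semi-infinite bar exactly when each coordinate does, i.e.\ each coordinate is $a$ or $d$, which is precisely the class $i=j=0$; every other bar is finite of length $L=\gamma-\beta$. As a consistency check, summing $N(k,i,j)$ over all $i,j$ should recover $\binom{n}{k}+(3^n-1)\binom{n-1}{k}$ from Theorem~\ref{thm:ph-k-main}, and for $n=2$ the per-class values reproduce Table~\ref{table:pentane} exactly (for instance $N(1,1,0)=4-4=0$, reflecting that the four critical points at $\alpha+\gamma$ are all deaths).

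The hard part will be justifying the death-counting step at the level of individual classes rather than merely of energies: the recursion $N(k,i,j)=P(k,i,j)-N(k-1,i-1,j+1)$ tacitly assumes that the only $(k-1)$-bars ending at energy $E(n,k,i,j)$ are those of class $(k-1,i-1,j+1)$, which requires the map $(i,j)\mapsto E(n,k,i,j)$ to be injective for fixed $k$. I would secure this either by invoking genericity of the butane critical values (so that $\delta-\gamma$ and $\beta-\alpha$ are rationally independent, making all $E(n,k,i,j)$ distinct, a property preserved under an arbitrarily small perturbation that changes no bar count), or, more robustly, by establishing the class-to-class death pairing directly from the tensor-product boundary operator of Corollary~\ref{cor_boundary} together with the persistent K\"unneth formula of Theorem~\ref{thm:persistent-kunneth}, which pairs a death in class $(k,i,j)$ with a bar in class $(k-1,i-1,j+1)$ independently of whether any two energies happen to coincide.
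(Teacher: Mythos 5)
Your proposal is correct and is essentially the paper's own proof: both arguments partition the index-$k$ critical points into the classes $(k,i,j)$ of Lemma~\ref{lem:num-class-kij}, invoke Lemma~\ref{lem_MorsePersistence} together with the uniform finite bar length $L=\gamma-\beta$ (Corollary~\ref{cor_finiteLength}) to obtain the recursion $N(k,i,j)=P(k,i,j)-N(k-1,i-1,j+1)$ via the identity $E(n,k,i,j)-L=E(n,k-1,i-1,j+1)$, and then solve that recursion by induction (the paper inducts on $k$, you on $i$; since the recursion lowers both indices simultaneously, the two are interchangeable). The only substantive difference is that you explicitly flag---and propose remedies for---the tacit assumption that distinct classes $(i,j)$ have distinct energies $E(n,k,i,j)$, a point the paper's proof passes over in silence when it matches deaths in class $(k,i,j)$ to births in class $(k-1,i-1,j+1)$; your suggested fixes (genericity of the butane critical values, or the class-level pairing from Corollary~\ref{cor_boundary} and Theorem~\ref{thm:persistent-kunneth}) are a refinement of, not a departure from, the published argument.
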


We check that when $i=j=0$, we get $\binom{n}{0,k,0,n-k}=\binom{n}{k,n-k}=\binom{n}{k}$ semi-infinite bars, as expected.

\begin{proof}[Proof of Theorem~\ref{thm:alkane-ph-characterization}]
Fix $n$ to be arbitrary.
We will induct on $k\le n$.

For the base case $k=0$, note that necessarily $i=0$.
The formula then follows from Lemma~\ref{lem:num-class-kij}, since the number of critical points of class $(k,i,j)$ is $2^{i+j}\binom{n}{i,k-i,j,n-k-j}$, and each of those necessarily gives birth to a 0-dimensional peristent homology bar at the corresponding energy value.

For the inductive step, assume that the formula is true for $k-1$, i.e.\ for all $i\le k-1$ and $j\le n-k+1$.
Our task is now to prove the formula is true for $k$, i.e.\ for all $i\le k$ and $j\le n-k$.

By Lemma~\ref{lem:num-class-kij} the number of critical points of class $(k,i,j)$ is $2^{i+j}\binom{n}{i,k-i,j,n-k-j}$.
Of those, the number of critical points that must be used to kill $(k-1)$-dimensional bars is equal, by induction, to
\[ 2^{i+j}\left(\sum_{\ell=0}^{i-1}(-1)^{\ell}\binom{n}{i-1-\ell,k-i,j+1+\ell,n-k-j}\right). \]
Indeed, if a critical point of class $(k,i,j)$ kills a $(k-1)$-dimensional bar of length $L$, then the birth time of that bar must have been the energy of a critical point of class $(k-1,i-1,j+1)$ (Note $E(n,k,i,j)-L=E(n,k-1,i-1,j+1)$).
Hence the number of $k$-dimensional persistent homology bars with birth energy equal to $E(n,k,i,j)$ is
\begin{align*}
&\textstyle{2^{i+j}\binom{n}{i,k-i,j,n-k-j}} \\
&\textstyle{- 2^{i+j}\left(\sum_{\ell=0}^{i-1}(-1)^{\ell}\binom{n}{i-1-\ell,k-i,j+1+\ell,n-k-j}\right)} \\
=&\textstyle{2^{i+j}\left({\scriptstyle\binom{n}{i,k-i,j,n-k-j}-\sum\limits_{\ell=0}^{i-1}(-1)^{\ell}\binom{n}{i-1-\ell,k-i,j+1+\ell,n-k-j}}\right)} \\
=&\textstyle{2^{i+j}\left(\sum_{\ell=0}^{i}(-1)^{\ell}\binom{n}{i-\ell,k-i,j+\ell,n-k-j}\right).}
\end{align*}

\vspace*{-0.2in}
\end{proof}

Thus we have given a complete description of the sublevelset persistent homology of the alkanes, for all dimensions $n$, and for all homological dimensions $k$.

\section{The Morse--Smale Complex of the PEL of alkanes}
\label{app:morse-smale}

\vspace*{-0.1in}
When considering a given conformation we are interested in what conformations are likely to be obtained next as energy in the system increases or decreases.
For a non-critical point in the conformation space $X$, the paths $x:[0,1]\to X$ which pass through this point and are integral curves to the energy gradient begin and end at the critical points.
The energy surface --- whether reduced or not --- can therefore be partitioned into regions wherein gradient flow lines share endpoints.
Two points from the same region correspond to conformations which tend towards the same arrangement as energy in the system changes.
Such a partition can be formally constructed as the Morse--Smale complex for the given energy function $f$, provided that $f$ is a Morse function which additionally satisfies the transversality condition; see Appendix~\ref{app:morse-complex}.
The Morse--Smale complex is a cell complex on $X$ whose $n$-cells --- that is the pieces of the decomposition homeomorphic to $n$-dimensional Euclidean balls --- correspond to pairs of critical points $p$ and $q$ whose indices differ by $n$ and which are the endpoints of integral curves through points in the interior of the cell.
(We remark that the Morse--Smale complex is a different object than the Morse complex in Appendix~\ref{app:morse-complex}, though the two are closely related.)

The potential energy landscape for butane has six critical points, and so the Morse--Smale complex has six $0$-cells.
Three of these are minima and three are maxima, with indices $0$ and $1$ respectively.
Each minimum is connected via $1$-cells in the Morse--Smale complex to its two neighboring maxima, indicating the respective corresponding flows.
As the landscape is one-dimensional, this exhausts the complex.

The conformation space of pentane has two periodic dimensions, so critical points can occur with index $0$, $1$, or $2$.
There are nine critical points of index $0$ and also of index $2$, with the remaining $18$ being index $1$.
At horizontal and vertical slices, the pentane landscape is a copy of the butane landscape.
Hence, the Morse--Smale complex is also a copy at these slices.
In particular, the $1$-cells of the complex for pentane occur along a single varying dihedral angle.
Here though, there additionally are $2$-cells for each pair of adjacent minima and maxima.
These fill the regions between the $1$-cells joining saddles and extrema, and correspond to an uncountable family of flows from a minimum to a maximum.
Such flows never come to a stop at a saddle point, with both dihedral angles varying as the molecule changes directly from one extremum to another.
We used the Topology ToolKit (TTK)\cite{tierney2018ttk} to calculate the Morse--Smale complex and produce a visualization on a uniform sampling of the analytic potential energy landscape; see\cite{TTKAlkanes} for the code to generate the sampling and Figure~\ref{fig:EL-butane-pentane} for the visualization.

For hexane, the landscape is a copy of that for pentane at planar slices parallel to axis planes.
The $1$-cells still occur along a single varying dihedral angle, and the $2$-cells similarly occur along two dihedral angles varying with one dihedral angle fixed.
We now also have $3$-cells, which again give regions where the integral curves directly connect a minimum with a maximum without stopping at a saddle point.

\end{document}